\documentclass[journal,twoside,web]{ieeecolor}
\usepackage{generic}
\usepackage{cite}
\usepackage{amsmath,amssymb,amsfonts}
\usepackage{algorithmic}
\usepackage{graphicx}
\usepackage{textcomp}
\usepackage{booktabs}
\newtheorem{theorem}{Theorem}
\newtheorem{lemma}{Lemma}
\newtheorem{assumption}{Assumption}
\newtheorem{definition}{Definition}
\newtheorem{proposition}{Proposition}
\newtheorem{remark}{Remark}

\usepackage{algorithm}
\usepackage{hyperref}
\hypersetup{hidelinks}
\usepackage{textcomp}
\usepackage[caption=false,font=footnotesize]{subfig}
\def\BibTeX{{\rm B\kern-.05em{\sc i\kern-.025em b}\kern-.08em
    T\kern-.1667em\lower.7ex\hbox{E}\kern-.125emX}}
\begin{document}
\title{Tube-Based Robust Data-Driven Predictive Control}
\author{Chi Wang, David Angeli, \IEEEmembership{Fellow, IEEE}
\thanks{This work has been submitted to the IEEE for possible publication. Copyright may be transferred without notice, after which this version may no longer be accessible.}%
\thanks{Chi Wang and David Angeli are with the Control and Power Group, Department of Electrical and Electronic Engineering, Imperial College London, London SW7 2AZ, U.K. (e-mail: chi.wang23@imperial.ac.uk; d.angeli@imperial.ac.uk).}%
}

\maketitle

\begin{abstract}
This paper presents a tractable tube-based robust data-driven predictive control scheme that uses only a single finite noisy input-state trajectory of an unknown discrete-time linear time-invariant (LTI) system. A simplex constraint is imposed on the Hankel coefficient vector, yielding explicit polyhedral bounds on the prediction mismatch induced by bounded measurement noise. Using certified initial and terminal robust positively invariant (RPI) sets, we derive a tube-tightened formulation whose online optimization problem is a strictly convex quadratic program (QP). The resulting controller guarantees recursive feasibility, robust satisfaction of input and state constraints, and practical input-to-state stability of the closed loop with respect to measurement noise. Numerical examples illustrate the effectiveness, robustness, and closed-loop performance of the proposed method.
\end{abstract}

\begin{IEEEkeywords}
Data-driven control, linear time-invariant systems, robust positively invariant sets, model predictive control, robust control, state feedback, uncertain systems.
\end{IEEEkeywords}

\section{Introduction}
\label{sec:introduction}
Data-driven control design has recently emerged as a powerful alternative to
classical model-based methods. Within the so-called data-informativity
framework, one seeks to synthesize controllers directly from a finite set of
input-state (or input-output) trajectories of an unknown LTI system, without performing a separate identification step. This has led
to a variety of finite-data synthesis results for state-feedback regulation and
performance optimization, including data-driven stabilization~\cite{DePersisTesi2020TAC}, linear quadratic regulator (LQR) design~\cite{Dorfler2023TAC,Berberich2020ACC}, and state-feedback controllers
constructed directly from noisy data~\cite{bisoffi2021trade,vanWaardeMesbahi2020IFAC}.

While these works offer guarantees for unconstrained regulation and
performance optimization, the presence of input and state constraints remains a
major challenge in safety-critical applications. Model predictive control (MPC)
has become the standard paradigm for handling constraints in linear and
nonlinear systems by repeatedly solving a finite-horizon optimal control
problem online~\cite{AllgowerZheng2012,RawlingsMayneDiehl2017,Angeli2012EconomicMPC}.
Robust MPC schemes ensure constraint satisfaction and stability under bounded
disturbances; see, e.g.,~\cite{Kothare1996,Mayne2006ROF,Limon2002CDC}. Among these, tube-based MPC is a widely used paradigm: a nominal prediction is optimized subject to
tightened constraints, while a predesigned feedback law keeps the true state
within a robust invariant “tube” around the nominal trajectory~\cite{Rakovic2005MRPI,Mayne2005}. This separation yields simple online
optimization problems (often quadratic programs), preserves tunable prediction
horizons, and provides a transparent link between disturbance bounds and the
resulting conservatism. Motivated by these advantages, we also adopt a tube-based framework in this
paper, but in a fully data-driven setting. Instead of an explicit parametric
model, the tube dynamics and constraint tightenings are constructed directly
from measured trajectories by exploiting Willems’ fundamental lemma~\cite{Willems2005}. This connects tube-based MPC with a complementary line
of research rooted in the behavioral approach and data-driven predictive
control.

Willems’ fundamental lemma states that, under a suitable persistence-of-excitation
condition on the input, the columns of a Hankel matrix built from a single
measured trajectory span the set of all trajectories of the underlying LTI
system~\cite{Willems2005,Markovsky2008IJC}. This insight has enabled a
variety of behavioral and data-driven analysis and control methods, including
state-feedback design, dissipativity analysis, and predictive control
schemes that parametrize predicted trajectories directly in terms of data~\cite{DePersisTesi2020TAC,Romer2019LCSS,vanWaarde2020,berberich2020trajectory}.
In particular, data-enabled predictive control (DeePC)~\cite{coulson2019data}
and related data-driven MPC formulations~\cite{Huang2019CDC,Carlet2020ECCE} replace the explicit state-space model
by Hankel matrices of measured inputs and outputs, and optimize over
data-driven trajectories~\cite{Markovsky2021ARC}. Several extensions improve numerical robustness
and performance by introducing regularization, slack variables, or
distributionally robust costs, and some works address noise and decentralized
variants~\cite{Berberich2020IFAC,Huang2023TAC,Huang2022TCST,Coulson2019CDC}. However, most of
these contributions primarily enhance open-loop performance and numerical
robustness, while rigorous guarantees on recursive feasibility, closed-loop
constraint satisfaction, and stability in the presence of noisy data remain
largely unexplored.

To bridge this gap, a number of robust data-driven MPC schemes have been
proposed. The robust data-driven MPC formulations in~\cite{Berberich2021TAC,Berberich2020} provide some of the first closed-loop
theoretical guarantees under measurement noise. However, key tightening parameters are
computed under an assumption of noise-free data, and the resulting bounds tend
to be conservative. Other robust data-driven MPC schemes for linear systems,
such as~\cite{kerz2023data,Nguyen2023IFAC}, also rely on Hankel representations
but assume noise-free data in the design stage. The data-driven min--max MPC scheme in~\cite{xie2026} explicitly accounts for the worst-case effect of model uncertainty compatible with the data and achieves low conservatism; however, several design quantities cannot be computed directly from the data. A related work~\cite{Hu2025TAC}
achieves $H_\infty$-type performance and robust guarantees, but relies on
repeatedly solving large LMI problems with many multipliers, leading to
substantial online computational burden.

In this paper, we develop a tube-based robust data-driven predictive control (TRDDPC) scheme that
combines the structural advantages of tube-based MPC with the flexibility of
Willems’ fundamental lemma. We consider discrete-time constrained LTI systems
with additive measurement noise taking values in a known polyhedral set, and
assume that only an input–state trajectory is available offline. The
data-driven predictor is represented by Hankel matrices built from this
trajectory, and a robust tube is constructed entirely in terms of the offline
data. We show that the
resulting controller guarantees recursive feasibility, robust closed-loop
satisfaction of all input and state constraints, and practical input-to-state
stability. At the
same time, the online optimization retains a simple computational structure of a quadratic
program with tunable prediction horizon, akin to classical robust tube-based
MPC, while using fewer decision variables and less conservative tightenings
than existing robust data-driven MPC schemes. Moreover, all design parameters can be computed directly from the noisy offline data.

The remainder of this paper is organized as follows. Section~\ref{sec:Preliminaries}
states the required notation, the problem setup, and the resulting data-driven representation. In Section~\ref{sec:Robust} we formulate the tube-based robust data-driven predictive control
scheme, develop the tube construction and constraint-tightening procedure, and
establish the main results on recursive feasibility, closed-loop constraint
satisfaction and stability. Section~\ref{sec:examples} illustrates the
proposed TRDDPC scheme on numerical examples,
and Section~\ref{sec:conclusions} concludes the paper.
\section{Preliminaries}\label{sec:Preliminaries}
This section collects notation and preliminaries. Section~\ref{sec:notation} fixes notation, Section~\ref{subsec:ProblemSetup} states the considered problem, Section~\ref{sec:available} specifies the assumptions on available data, and Section~\ref{sec:Data-driven} recalls the data-driven representation used throughout the paper.
\subsection{Notation}\label{sec:notation}
For $n\in\mathbb{Z}_{>0}$, $I_n$ is the $n\times n$ identity and $0$ a zero vector/matrix of compatible dimension.
For a matrix $A$, $A^\top$ is the transpose.
We write $\mathbb{N}:=\{1,2,\ldots\}$, $\mathbb{N}_0:=\{0,1,2,\ldots\}$, and for $a\le b$, $\mathbb{N}_{a:b}:=\{a,a{+}1,\ldots,b\}$.

For a vector $x$, $\|x\|_2$, $\|x\|_1$, and $\|x\|_\infty$ denote the Euclidean, $\ell_1$-, and $\ell_\infty$-norms.
For a matrix argument, $\|\cdot\|$ is the induced operator norm.
Define the Euclidean unit ball as $\mathbb{B}_2:=\{x\in\mathbb{R}^n\mid \|x\|_2\le 1\}$.
For sets $\mathcal A,\mathcal B\subset\mathbb{R}^n$, define the Minkowski sum and Pontryagin difference as
$\mathcal A\oplus\mathcal B := \{a+b \mid a\in\mathcal A,\ b\in\mathcal B\}$ and
$\mathcal A\ominus\mathcal B := \{x \in \mathbb{R}^n \mid \{x\}\oplus\mathcal B \subseteq \mathcal A\}$.
For a matrix $K$, $K\mathcal A:=\{Ka\mid a\in\mathcal A\}$ denotes the linear image.
The convex hull is $\operatorname{co}(\mathcal S)$ and the Kronecker product is $\otimes$.
In predictive control, $x_{\ell|k}$ denotes the predicted state $\ell$ steps ahead at time $k$, and $u_{\ell|k}$ the corresponding predicted input.
For a sequence $\{s_i\}_{i\in\mathbb{Z}}$ with $s_i\in\mathbb{R}^{d}$, let $\operatorname{col}(s_0,\ldots,s_L):=[s_0^\top,\ldots,s_L^\top]^\top$,
$s_{[i{:}j]}:=(s_i,\ldots,s_j)$ and $s_{[i|k:j|k]}:=\operatorname{col}(s_{i|k},\ldots,s_{j|k})$.

For a length-$T$ sequence $s_{[0{:}T{-}1]}$ with $s_i\in\mathbb{R}^p$ and depth $L{+}1$, the block Hankel matrix is
\[
H_{L+1}\!\big(s_{[0{:}T{-}1]}\big):=
\begin{bmatrix}
s_0 & s_1 & \cdots & s_{T-L-1}\\
s_1 & s_2 & \cdots & s_{T-L}\\
\vdots & \vdots & \ddots & \vdots\\
s_L & s_{L+1} & \cdots & s_{T-1}
\end{bmatrix}.
\]
We write $H_s:=H_{L+1}(s_{[0{:}T{-}1]})$.
For $\ell\in\mathbb{N}_{0:L}$, $H_{s_{[\ell]}}$ denotes the $\ell$-th $p$-row block of $H_s$, and for $i\le j$,
$H_{s_{[i:j]}}$ is the block-row submatrix containing blocks $i,\ldots,j$.

\subsection{Problem Setup}\label{subsec:ProblemSetup}
We consider a linear time-invariant (LTI) system
\begin{subequations}\label{eq:system_model}
\begin{align}
    x_{k+1} &= A^\ast x_k + B^\ast u_k, \label{eq:lti_system} \\
    \hat{x}_k &= x_k + w_k, \label{eq:measurement_model}
\end{align}
\end{subequations}
where $x_k\in\mathbb{R}^n$ and $u_k\in\mathbb{R}^m$ denote the state and input, respectively, and
$\hat{x}_k\in\mathbb{R}^n$ is the measured state.
The term $w_k\in\mathbb{R}^n$ denotes the measurement noise at time $k$ and satisfies $w_k\in\mathcal{W}$ for all $k$ (see Assumption~\ref{as:measurement noise}).
The true (unknown but fixed) system matrices are $A^\ast\in\mathbb{R}^{n\times n}$ and $B^\ast\in\mathbb{R}^{n\times m}$, and the pair $(A^\ast,B^\ast)$ is controllable.

Our objective is to design a robust data-driven predictive control scheme that stabilizes the equilibrium while both closed-loop input and state constraints are satisfied. The predictive control algorithm is based on repeatedly solving an optimal control problem (OCP):
\begin{subequations}\label{eq:mpc_problem}
\begin{align}
\min_{U_k} \quad & \sum_{\ell=0}^{L-1} J_\mathrm{s}(x_{\ell|k}, u_{\ell|k}) + J_\mathrm{f}(x_{L|k}),
\label{eq:mpc_obj} \\[1mm]
x_{0|k} \; &= x_k,
\label{eq:mpc_init} \\[1mm]
x_{\ell+1|k} \; &= A^\ast x_{\ell|k}+B^\ast u_{\ell|k}, \qquad \forall\, \ell \in \mathbb{N}_{0:L-1},
\label{eq:mpc_dyn} \\[1mm]
u_{\ell|k} \; &\in \mathcal{U}, \qquad \forall\, \ell \in \mathbb{N}_{0:L-1},
\label{eq:mpc_input} \\[1mm]
x_{\ell|k} \; &\in \mathcal{X}, \qquad \forall\, \ell \in \mathbb{N}_{1:L}.
\label{eq:mpc_state}
\end{align}
\end{subequations}
Here $U_k:=u_{[0|k,\,L-1|k]}$ denotes the predicted input sequence for this OCP, and $J_\mathrm{s}$ and $J_\mathrm{f}$ denote the stage and terminal costs, respectively. Both $\mathcal{U}$ and $\mathcal{X}$ are user-specified convex polytopic sets:
\begin{subequations}\label{eq:constraint_sets}
\begin{equation}
\mathcal{U} = \left\{ u \in \mathbb{R}^{m} \;\middle|\; G_u u \leq h_u \right\},
\label{eq:constraint_sets_U}
\end{equation}
\begin{equation}
\mathcal{X} = \left\{ x \in \mathbb{R}^{n} \;\middle|\; G_x x \leq h_x \right\}.
\label{eq:constraint_sets_X}
\end{equation}
\end{subequations}
where $G_u,G_x$ and $h_u,h_x$ have compatible dimensions.
We assume that both $\mathcal{U}$ and $\mathcal{X}$ contain the origin in their
interiors.
Such polyhedral constraints are standard in tube-based robust MPC formulations~\cite{Langson2004}.

\subsection{Available Data}\label{sec:available}
\begin{definition}
Let $L,T\in\mathbb{Z}_{>0}$ with $T\ge L$.
The input $u^d_{[0{:}T{-}1]}$ is said to be persistently exciting of order $L+1$ if
$\operatorname{rank}\!\big(H_{L+1}\!\big(u^d_{[0:T-1]}\big)\big)=m(L+1)$.
\end{definition}

\begin{assumption}\label{as:available-trajectory}
A trajectory $\big(u^d_{[0{:}T{-}1]},\,\hat{x}^d_{[0{:}T{-}1]}\big)$ of length $T$
generated by system~\eqref{eq:system_model} is available.
Fix a depth $L{+}1$ with $0\le L<T$.
The input $u^d_{[0{:}T{-}1]}$ is persistently exciting of order $L{+}n{+}1$.
\end{assumption}

Given Assumption~\ref{as:available-trajectory}, define
$H_u := H_{L+1}(u^d_{[0{:}T{-}1]})$ and
$\hat H_x := H_{L+1}(\hat x^d_{[0{:}T{-}1]})$.
For the unmeasured clean state and noise, define
$H_x := H_{L+1}(x^d_{[0{:}T{-}1]})$ and
$H_w := H_{L+1}(w^d_{[0{:}T{-}1]})$, so that
$\hat H_x = H_x + H_w$.
Only $H_u$ and $\hat H_x$ are available; $H_x$ and $H_w$ are unknown.

\begin{assumption}\label{as:measurement noise}
The measurement noise $w_k\in\mathbb{R}^{n}$ satisfies $w_k\in\mathcal{W}$ for all $k$, where
\begin{equation}
\mathcal{W} := \{\, w \in \mathbb{R}^{n} \mid G_w w \le h_w \,\},
\label{eq:disturbance_polytope}
\end{equation}
is a known, compact, convex polytope containing the origin.
\end{assumption}

\subsection{Data-driven Representation}\label{sec:Data-driven}
Willems’ lemma provides an appealing data-driven representation of unknown LTI systems using trajectory data generated by a persistently exciting input signal~\cite{Willems2005}. In this paper, we state and use the corresponding Fundamental Lemma in the
input-state space; cf.~\cite{vanWaarde2020}.

\begin{lemma}\label{lem:FL_input_state}
Let $L,T\in\mathbb{Z}_{>0}$ with $T\ge L$.
Consider system~\eqref{eq:lti_system} with $(A^\ast, B^\ast)$ controllable. If the input sequence
$u^d_{[0{:}T{-}1]}$ is persistently exciting of order $L+n+1$, then any
$(L+1)$-long input–state sequence $({u}_{[k,k+L]},\,{x}_{[k,k+L]})$
is a valid trajectory of the system if and only if there exists
$g\in\mathbb{R}^{T-L}$ such that
\begin{equation}
\label{eq:data_eq_stack}
\begin{bmatrix}
{u}_{[k,k+L]}\\[.3em]
{x}_{[k,k+L]}
\end{bmatrix}
=
\begin{bmatrix}
H_u\\[.3em]
H_x
\end{bmatrix}g.
\end{equation}
\end{lemma}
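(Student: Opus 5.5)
The plan is to prove the two directions separately, with all the work in sufficiency (``only if''). The central claim is that the stacked data matrix $\begin{bmatrix}H_{u}\\ H_{x^0}\end{bmatrix}$, built from $H_u$ and the first block row $H_{x_{[0]}}$ of $H_x$, has full row rank $m(L+1)+n$. Once this holds, every admissible trajectory can be matched by some $g$.

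\emph{Necessity} (``if'') follows from linearity. Each column of $\begin{bmatrix}H_u\\ H_x\end{bmatrix}$ is a window $(u^d_{[j,j+L]},x^d_{[j,j+L]})$ of the clean data. Since $x^d_{i+1}=A^\ast x^d_i+B^\ast u^d_i$, each column satisfies the dynamics. Hence, for any $g$, the block rows of $u=H_ug$ and $x=H_xg$ satisfy $x_{\ell+1}=A^\ast x_\ell+B^\ast u_\ell$, because the recursion is linear and holds columnwise. By time invariance, such a sequence is a valid trajectory starting at any time $k$.

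\emph{Sufficiency} (``only if''). A valid trajectory is determined by its initial state and input sequence: $x_{[k,k+L]}=\mathcal{O}x_k+\mathcal{T}u_{[k,k+L]}$, where $\mathcal{O}$ and $\mathcal{T}$ are the stacked powers of $A^\ast$ and the block-Toeplitz Markov matrix. The data satisfy the same relation, $H_x=\mathcal{O}H_{x_{[0]}}+\mathcal{T}H_u$. It therefore suffices to find $g$ with $H_ug=u_{[k,k+L]}$ and $H_{x_{[0]}}g=x_k$, which is guaranteed by the full-row-rank claim above.

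\emph{Main obstacle: the rank claim.} I would follow the standard argument of Willems' lemma (as in \cite{Willems2005,vanWaarde2020}). Suppose a row vector $(\xi,\eta)$, with $\xi\in\mathbb{R}^{n}$ and $\eta\in\mathbb{R}^{m(L+1)}$, annihilates $\begin{bmatrix}H_{x_{[0]}}\\ H_u\end{bmatrix}$. I would shift this relation along the longer Hankel matrix of depth $L+n+1$, using the dynamics to express the states at later times through $x_0$ and the inputs. This produces $n+1$ annihilating vectors of $\begin{bmatrix}H_{x_{[0]}}\\ H_{L+n+1}(u^d)\end{bmatrix}$.

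Next, I would use Cayley--Hamilton together with controllability of $(A^\ast,B^\ast)$ to combine these vectors so that the state part cancels. This yields a nonzero annihilator of $H_{L+n+1}(u^d)$ alone, unless $\eta=0$, in which case $\xi=0$ also follows. Either outcome contradicts persistency of excitation of order $L+n+1$ unless $(\xi,\eta)=0$.

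The delicate step is the bookkeeping that keeps the combined input annihilator nonzero; this is exactly where controllability is used. Alternatively, since the paper is essentially quoting \cite{vanWaarde2020}, I could cite that result directly for the rank condition and keep only the short argument above.
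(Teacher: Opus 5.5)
Your proposal is correct and follows the standard argument of \cite{vanWaarde2020}; the paper gives no proof of its own and simply cites that reference. The chain is the same: full row rank of $\begin{bmatrix}H_u\\ H_{x_{[0]}}\end{bmatrix}$ via the shifted annihilators, Cayley--Hamilton (with $\alpha_n=1$ forcing $\eta=0$ by back-substitution), controllability (giving $\xi^\top A^iB=0$ for $i=0,\dots,n-1$, hence $\xi=0$), and finally $H_x=\mathcal O H_{x_{[0]}}+\mathcal T H_u$. Two small bookkeeping points: in the shifted relations the state row must be truncated to $[x^d_0,\dots,x^d_{T-L-n-1}]$ so that it matches the $T-L-n$ columns of $H_{L+n+1}(u^d)$, and your labels ``necessity''/``sufficiency'' are swapped relative to the Hankel condition, which is harmless.
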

In data-driven predictive control, the stacked Hankel matrices in
\eqref{eq:data_eq_stack} replace the explicit state-space model, and the
coefficient vector $g$ is treated as a decision variable that parametrizes the
predicted input-state trajectory. In the noise-free setting, Lemma~\ref{lem:FL_input_state} yields an exact behavioral characterization via Hankel-matrix equalities.

\begin{remark}
A minimal data-length condition that guarantees that the collected clean
data can span the space of all trajectories of depth $L$ is
$T \ge (m+1)(L+n+1)-1$.
\end{remark}

Since the system matrices $(A^\ast,B^\ast)$ are not available, we define equilibria directly in terms of input-state trajectories.

\begin{definition}\label{def:equilibrium}
A pair $(u_s,x_s)\in\mathbb R^{m}\times\mathbb R^{n}$ is an equilibrium of the LTI system~\eqref{eq:lti_system}
if the constant sequence $(u_k,x_k)\equiv (u_s,x_s)$ for all $k\in\mathbb N_0$ is a trajectory of~\eqref{eq:lti_system}.
\end{definition}

Without loss of generality, we take the origin $(u_s,x_s) = (0,0)$, with
$0\in\mathbb R^m$ and $0\in\mathbb R^n$, as the equilibrium operating point
for simplicity; extensions to nonzero equilibria follow by a standard change
of variables. In the remainder of the paper, we develop a
robust data-driven predictive control scheme based on
Lemma~\ref{lem:FL_input_state}, operating in the presence of measurement noise
and using equilibria in the sense of Definition~\ref{def:equilibrium} for data collection and controller design.

To prevent divergence of the measurement noise induced by the open-loop dynamics of~$A^\ast$,
we pre-stabilize the plant with a static state-feedback law.
The control input is decomposed as
\begin{equation}
    u_k = K x_k + v_k,
\end{equation}
where $K \in \mathbb{R}^{m \times n}$ is a stabilizing feedback gain.

\begin{remark}
Stabilizing feedback gains \(K\) for the LTI system \eqref{eq:system_model} can be computed directly from noisy data by solving the SDP in Proposition~\ref{prop:K-stab}. The resulting $K$ guarantees that the true closed-loop matrix \(A^\ast\!+\!B^\ast K\) is Schur.
\end{remark}

Fix such a stabilizing gain $K$ and introduce the noise-free auxiliary input
\(
v_k := u_k - K x_k.
\)
Then the closed-loop dynamics can be written as
\(
x_{k+1} = (A^\ast{+}B^\ast K)x_k + B^\ast v_k.
\)
In the noise-free setting, a closed-loop variant of Willems' Fundamental Lemma provides a data-driven
parameterization of all length-$(L{+}1)$ closed-loop trajectories; see, e.g.,~\cite[Lemma~2]{kerz2023data}.
\begin{lemma}
Consider open-loop data of system \eqref{eq:lti_system}
and let $K \in \mathbb{R}^{m \times n}$ be chosen such that
$u^{d}_{[0,T-1]} - K x^{d}_{[0,T-1]}$ is persistently exciting of order $L+n+1$.
Define the block-diagonal expansion of $K$ as
\[
\tilde K :=\ I_{L+1}\otimes K
=
\begin{bmatrix}
K & 0 & \cdots & 0\\
0 & K & \cdots & 0\\
\vdots & \vdots & \ddots & \vdots\\
0 & 0 & \cdots & K
\end{bmatrix}
 \in\mathbb{R}^{m(L+1)\times n(L+1)}.
\]
Then, any $(L{+}1)$-long sequence $(v_{[k,k+L]},\,x_{[k,k+L]})$ is a valid trajectory
of the closed-loop system
\[
x_{k+1}=(A^\ast+B^\ast K)x_k + B^\ast v_k,
\]
if and only if there exists $g \in \mathbb{R}^{T-L}$ such that
\[
\begin{bmatrix}
v_{[k,k+L]}\\[1mm]
x_{[k,k+L]}
\end{bmatrix}
=
{\begin{bmatrix}
H_u - \tilde K\, H_x\\
H_x
\end{bmatrix}}
\, g.
\]
\end{lemma}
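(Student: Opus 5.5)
The plan is to reduce the statement to Lemma~\ref{lem:FL_input_state} applied to the pre-stabilized system
\[
x_{k+1}=(A^\ast+B^\ast K)x_k+B^\ast v_k .
\]
Note that the Hankel matrices here are the clean ones $H_u$ and $H_x$, which is the noise-free setting. Two observations are needed first.

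\textbf{Step 1: the pair $(A^\ast+B^\ast K,B^\ast)$ is controllable.} Controllability is invariant under state feedback. The Hautus rank
\[
\operatorname{rank}\,[\,A^\ast+B^\ast K-\lambda I,\ B^\ast\,]
\]
equals
\[
\operatorname{rank}\,[\,A^\ast-\lambda I,\ B^\ast\,],
\]
because the former is obtained from the latter by right multiplication with the invertible block matrix $\begin{bmatrix} I & 0\\ K & I\end{bmatrix}$.

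\textbf{Step 2: the recorded data form a closed-loop trajectory.} Set $v^d_i:=u^d_i-Kx^d_i$. Then
\[
x^d_{i+1}=A^\ast x^d_i+B^\ast u^d_i=(A^\ast+B^\ast K)x^d_i+B^\ast v^d_i,
\]
so $(v^d_{[0:T-1]},x^d_{[0:T-1]})$ is a trajectory of the closed-loop system. The Hankel map is linear, and the block rows of $H_u$ and $H_x$ are aligned. Hence
\[
H_{L+1}(v^d)=H_u-\tilde K H_x,
\]
because the $\ell$-th block row is $H_{u_{[\ell]}}-KH_{x_{[\ell]}}$, which is exactly what $\tilde K=I_{L+1}\otimes K$ produces. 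By hypothesis $v^d$ is persistently exciting of order $L+n+1$.

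\textbf{Step 3: apply Lemma~\ref{lem:FL_input_state}.} Apply it to the system $(A^\ast+B^\ast K,B^\ast)$ with input data $v^d$ and state data $x^d$. It gives directly that an $(L+1)$-long $(v,x)$ is a closed-loop trajectory if and only if
\[
\operatorname{col}(v,x)=\begin{bmatrix}H_{L+1}(v^d)\\ H_x\end{bmatrix} g
\]
for some $g$. Substituting Step~2 concludes the proof.

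As an independent check of the "if" direction, any column of the stacked matrix is a shifted window of the data trajectory, and linear combinations of trajectories are trajectories. The "only if" direction alternatively follows by mapping a closed-loop trajectory $(v,x)$ to the open-loop trajectory $(v+\tilde Kx,\,x)$. One would then apply Lemma~\ref{lem:FL_input_state} to the open-loop system, provided $u^d$ is itself sufficiently exciting. That proviso is why I would use the route through Step~3 instead.

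The main obstacle is ensuring that the hypotheses of Lemma~\ref{lem:FL_input_state} transfer cleanly to the closed-loop system. Steps~1 and~2 do exactly this: controllability is preserved under feedback, and the persistency of excitation is assumed directly on $v^d$ rather than on $u^d$. With those in place, the rest is bookkeeping of the Kronecker structure.
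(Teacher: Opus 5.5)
Your proof is correct. The paper states this lemma without proof and defers to \cite[Lemma~2]{kerz2023data}; your argument is the standard one behind that result: $(u^d-Kx^d,\,x^d)$ is a trajectory of the pair $(A^\ast+B^\ast K,B^\ast)$, which remains controllable under feedback, its input is persistently exciting by hypothesis, and $H_{L+1}(u^d-Kx^d)=H_u-\tilde K H_x$, so Lemma~\ref{lem:FL_input_state} applies directly.
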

According to this lemma, we can include the state-feedback
data-driven system representation in the OCP.

\section{Tube-based robust Data-Driven Predictive Control}\label{sec:Robust}
In this section, we introduce a tube-based robust data-driven predictive control scheme for inexact data conditions. To ensure that the closed-loop (actual) state respects the original constraints in the presence of measurement noise, the scheme employs a data-driven constraint-tightening step. Section~\ref{Scheme} presents the tube-based robust data-driven predictive control scheme, while
Section~\ref{AA} details the tightening procedure.
Recursive feasibility, closed-loop constraint satisfaction, and practical ISS are
established in Sections~\ref{sec:rec-feas}, \ref{Constraint Satisfaction},
and~\ref{ISS}, respectively.

\subsection{Proposed Data-Driven Predictive Control Scheme}\label{Scheme}
In the presence of measurement noise, the Hankel matrices built from inexact data
do not span the exact trajectory space. This has two concrete effects:
(i) the offline collected noisy data induce multiplicative
uncertainty (equivalently, a parametric perturbation from the model-based viewpoint~\cite{Berberich2021TAC}),
and (ii) the online measurements corrupt the initial condition used to anchor
predictions. We address these issues by proposing a robust data-driven predictive control scheme.
To mitigate noise amplification in the data-driven predictions, we constrain the Hankel coefficient vector $g$ to the unit simplex,
\begin{equation}\label{eq:simplex_regularizer}
g\in\Delta^{T-L},\qquad
\Delta^{T-L}:=\{g\in\mathbb{R}^{T-L}\mid  g\ge0, \ \mathbf{1}^{\top} g = 1\},
\end{equation}
where $\mathbf{1}\in\mathbb{R}^{T-L}$ denotes the all-ones vector. Compared with least-squares regularization~\cite{Coulson2019CDC,9705109},
the simplex regularizer is more conservative, since it does not directly
minimize the noise component in $\hat H_x g$. It instead trades off
estimation tightness for horizon-independent robustness: by restricting $g$
to be a convex weight vector, we obtain blockwise noise bounds that remain
uniform with respect to the prediction horizon, which is essential for the
robust analysis below. Moreover, the simplex constraint guarantees that any Hankel-based
predicted state $z_{\ell|k}=\hat H_{x_{[\ell]}} g$ and input
$u_{\ell|k}=H_{u_{[\ell]}} g$ can be written as convex combinations of
length-$L{+}1$ offline collected data segments.
This convex-combination structure converts otherwise hard-to-quantify multiplicative uncertainty along the predicted trajectory into a uniformly bounded, blockwise additive residual, thereby rendering the effect of measurement noise linear.

To make this precise, we introduce the notion of an admissible noise
Hankel matrix.
In addition to the particular (unknown) noise Hankel $H_w$ associated with the
collected dataset, we consider admissible noise Hankel matrices, denoted $\widetilde H_w$. Specifically, let $ w^{\mathrm{adm}}_{[0{:}T{-}1]}$ be any noise realization of length $T$ such that
$w_k^{\mathrm{adm}} \in \mathcal W$ for all $k$. We define $\widetilde H_w$ to be the block Hankel matrix constructed from this
realization $ w^{\mathrm{adm}}_{[0{:}T{-}1]}$.

The next lemma formalizes that, under the simplex constraint on $g$, each
block $\widetilde H_{w_{[\ell]}} g$ remains uniformly bounded in $\mathcal W$.

\begin{lemma}\label{lem:block-agg-poly}
Let Assumption~\ref{as:measurement noise} hold and let
$g\in\mathbb{R}^{T-L}$ satisfy~\eqref{eq:simplex_regularizer}.
Let $\widetilde H_w$ be any admissible noise Hankel matrix constructed from
a realization $ w^{\mathrm{adm}}_{[0{:}T{-}1]}$ with
$w_k^{\mathrm{adm}}\in\mathcal W$ for all $k$. Then, for all
$\ell \in \mathbb{N}_{0:L}$,
\begin{equation}\label{eq:block_bound_poly}
\widetilde H_{w_{[\ell]}}\, g \in \mathcal W,
\qquad\text{equivalently}\qquad
G_w\,\widetilde H_{w_{[\ell]}}\, g\le h_w.
\end{equation}
\end{lemma}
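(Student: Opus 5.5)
The plan is to show directly that each block $\widetilde H_{w_{[\ell]}}\,g$ is a convex combination of points of $\mathcal W$, and then use convexity of $\mathcal W$.

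First, I will write out the $\ell$-th block row of the Hankel matrix explicitly. By the definition of $H_{L+1}(\cdot)$, for each $\ell\in\mathbb N_{0:L}$,
\[
\widetilde H_{w_{[\ell]}}=\begin{bmatrix} w^{\mathrm{adm}}_{\ell} & w^{\mathrm{adm}}_{\ell+1} & \cdots & w^{\mathrm{adm}}_{\ell+T-L-1}\end{bmatrix}\in\mathbb R^{n\times(T-L)} .
\]
The indices $\ell+j-1$ range over $\mathbb N_{\ell:\ell+T-L-1}\subseteq\mathbb N_{0:T-1}$, so every column is a sample of the admissible realization and therefore lies in $\mathcal W$. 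It follows that
\[
\widetilde H_{w_{[\ell]}}\,g=\sum_{j=1}^{T-L} g_j\, w^{\mathrm{adm}}_{\ell+j-1}.
\]

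Second, I will invoke the simplex constraint~\eqref{eq:simplex_regularizer}. Since $g_j\ge 0$ and $\sum_j g_j=1$, the sum above is a convex combination of elements of $\mathcal W$. By Assumption~\ref{as:measurement noise}, $\mathcal W$ is convex, so the combination lies in $\operatorname{co}(\mathcal W)=\mathcal W$.

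For the equivalent inequality form, I will argue directly from the halfspace description. Multiplying by $G_w$ gives
\[
G_w\,\widetilde H_{w_{[\ell]}}\,g=\sum_j g_j\,G_w w^{\mathrm{adm}}_{\ell+j-1}\le \sum_j g_j\,h_w=h_w .
\]
Here the inequality holds componentwise because each $g_j\ge0$ and each $G_w w^{\mathrm{adm}}_{\ell+j-1}\le h_w$, and the final equality uses $\mathbf 1^\top g=1$.

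I do not expect a real obstacle here. The only point needing care is the index bookkeeping: the $\ell$-th block row must use only samples $w^{\mathrm{adm}}_k$ with $k\in\mathbb N_{0:T-1}$, which holds because the last column index is $\ell+T-L-1\le T-1$ for $\ell\le L$. Since the argument never uses the Hankel structure beyond this, the bound holds uniformly in $\ell$ and for every admissible realization, including the true $H_w$.
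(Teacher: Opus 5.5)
Your proposal is correct and follows essentially the same argument as the paper: each column of $\widetilde H_{w_{[\ell]}}$ is an admissible sample in $\mathcal W$, so the simplex constraint makes $\widetilde H_{w_{[\ell]}}g$ a convex combination of such samples, and applying $G_w$ componentwise gives $G_w\widetilde H_{w_{[\ell]}}g\le h_w$. Your explicit index check ($\ell+T-L-1\le T-1$ for $\ell\le L$) is a small detail that the paper leaves implicit.
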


\begin{proof}
By Assumption~\ref{as:measurement noise}, each column of the block-row
matrix $\widetilde H_{w_{[\ell]}}$ is a noise sample $w_i^{\mathrm{adm}}$
satisfying $G_w w_i^{\mathrm{adm}} \le h_w$. Since $g$ satisfies
\eqref{eq:simplex_regularizer}, the product $\widetilde H_{w_{[\ell]}} g$
is a convex combination of these columns:
\[
\widetilde H_{w_{[\ell]}} g
=
\sum_i g^i\, w_i^{\mathrm{adm}},
\qquad
g^i \ge 0,\ \ \sum_i g^i = 1.
\]
Applying $G_w$ and using $G_w w_i^{\mathrm{adm}} \le h_w$ componentwise,
we obtain
\[
G_w\,\widetilde H_{w_{[\ell]}} g
=
\sum_i g^i\, G_w w_i^{\mathrm{adm}}
\;\le\;
\sum_i g^i\, h_w
=
h_w,
\]
where the inequality is understood elementwise.
Hence $\widetilde H_{w_{[\ell]}} g$ satisfies $G_w(\widetilde H_{w_{[\ell]}} g)\le h_w$,
i.e., $\widetilde H_{w_{[\ell]}} g \in \mathcal W$, which proves
\eqref{eq:block_bound_poly}.
\end{proof}

Thus each predicted noise block is no worse than a single noise sample, preventing growth with the horizon. This blockwise bound is the key technical ingredient that will later allow us
to construct polytopic tightening sets in Section~\ref{AA}.

Given the Hankel matrices $H_u$, $\hat H_x$ and measured state $\hat x_k$ at time $k$, we propose the following
robust data-driven predictive optimal control problem:
\begin{subequations}\label{eq:robust_deepc}
\begin{align}
\underset{\{Z_k,\,V_k,\,g\}}{\text{minimize}}\quad
& \sum_{\ell=0}^{L-1} J_{\mathrm{s}}\!\bigl(z_{\ell|k}, v_{\ell|k}+K z_{\ell|k}\bigr)
  \;+\; J_{\mathrm{f}}\!\bigl(z_{L|k}\bigr),
\label{eq:robust_deepc_obj}\\[1mm]
\text{subject to}\qquad
& z_{0|k}-\hat x_k \;\in\; \mathcal E,
\label{eq:robust_deepc_init}\\[1mm]
& \begin{bmatrix}
  v_{[0|k,\,L|k]}\\[.3mm]
  z_{[0|k,\,L|k]}
  \end{bmatrix}
  =
  \begin{bmatrix}
  H_u - \tilde K\,\hat H_x\\[.3mm]
  \hat H_x
  \end{bmatrix} g,
\label{eq:robust_deepc_dyn}\\[1mm]
& g\in\Delta^{T-L},
\label{eq:robust_deepc_simplex}\\[1mm]
& z_{\ell|k} \;\in\; \mathcal Z,\qquad \forall\, \ell \in \mathbb{N}_{1:L},
\label{eq:robust_deepc_state}\\[1mm]
& v_{\ell|k}+K z_{\ell|k} \;\in\; \hat{\mathcal U},\qquad \forall\, \ell \in \mathbb{N}_{0:L-1},
\label{eq:robust_deepc_input}\\[1mm]
& z_{L|k} \;\in\; \mathcal Z_{\mathrm f}.
\label{eq:robust_deepc_terminal}
\end{align}
\end{subequations}
Here, $V_k:=v_{[0|k,\,L-1|k]}$ and $Z_k:=z_{[0|k,\,L|k]}$ denote the predicted input and state sequences, respectively; $\mathcal Z\subset\mathbb R^n$ and $\hat{\mathcal U}\subset\mathbb R^m$ are tightened state and input polytopes, and $\mathcal E$ and $\mathcal Z_f\subseteq \mathcal Z$ will be chosen as RPI polytopes in Section~\ref{sec:rec-feas}.
The tightened state and input constraints \eqref{eq:robust_deepc_state}–\eqref{eq:robust_deepc_input}
guarantee closed-loop constraint satisfaction in the presence of measurement noise.
The initial constraint \eqref{eq:robust_deepc_init} and the terminal constraint
\eqref{eq:robust_deepc_terminal} are pivotal for recursive feasibility and stability. All constraints in \eqref{eq:robust_deepc} are affine or, by design, polyhedral. The stage and terminal costs are
quadratic:
\begin{equation}\label{eq:QR_costs}
\begin{aligned}
J_{\mathrm s}\!\big(z_{\ell|k},\,v_{\ell|k}{+}Kz_{\ell|k}\big)
&= z_{\ell|k}^\top Q z_{\ell|k}\\
  &\quad+ (v_{\ell|k}{+}Kz_{\ell|k})^\top R (v_{\ell|k}{+}Kz_{\ell|k}),\\
J_{\mathrm f}(z_{L|k})
&= z_{L|k}^\top P_{L} z_{L|k},
\end{aligned}
\end{equation}
with $Q\succeq \underline{q}I \succ 0$, $R\succeq \underline{r}I \succ 0$, the terminal weight $P_{L}\succ0$ is selected
later, consistently with the stabilizing gain $K$, to satisfy the terminal
Lyapunov decrease condition established in Section~\ref{ISS}. Consequently,
\eqref{eq:robust_deepc} is a strictly convex quadratic program in $(Z_k,V_k)$ and hence
admits a unique optimal pair $(Z_k^\star,V_k^\star)$, solvable efficiently.
The Hankel coefficient $g^\star$ need not be unique: any $g^\star$ in the simplex that
satisfies the consistency equalities below generates the same optimal sequences
$(Z_k^\star,V_k^\star)$:
\begin{equation}\label{eq:opt_sequences}
\begin{aligned}
&\begin{bmatrix}
v_{[0|k,\,L|k]}^\star\\[.3mm]
z_{[0|k,\,L|k]}^\star
\end{bmatrix}~=
\begin{bmatrix}
H_u - \tilde K\,\hat H_x\\[.3mm]
\hat H_x
\end{bmatrix} g^\star, \\
&g^\star\in\Delta^{T-L}.
\end{aligned}
\end{equation}

The implicit data-driven predictive control law implemented in closed loop is
\begin{equation}\label{eq:kappa_deepc_clean}
u_k \;=\; \kappa_L^\star(\hat x_k)
\;:=\; v_{0|k}^\star \;+\; K\hat x_k.
\end{equation}
We define the admissible set of predicted initial state as
\begin{equation}\label{eq:Z_L_feas}
\mathcal Z_{L}^{\mathrm{anc}}
:=
\Big\{
z\in\mathbb R^{n}\ \Big|\ 
\exists(Z_k,V_k,g):
\eqref{eq:robust_deepc_dyn}\text{--}\eqref{eq:robust_deepc_terminal}
\ \text{hold}, z_{0|k}=z
\Big\}.
\end{equation}
The corresponding admissible set of measured states is
\begin{equation}\label{eq:Xhat_L_feas_def}
\begin{aligned}
\hat{\mathcal X}_L^{\mathrm{feas}}
:&=
\bigl\{\,\hat x\in\mathbb R^n \ \big|\ 
\exists\, z\in\mathcal Z_{L}^{\mathrm{anc}}
\ \text{with}\ z-\hat x\in\mathcal E
\bigr\}\\
 & =\ 
\mathcal Z_{L}^{\mathrm{anc}}\ \oplus\ (-\mathcal E).
\end{aligned}
\end{equation}
Finally, we define the
closed-loop feasible set (data-driven region of attraction) as
\begin{equation}\label{eq:X_L_feas_def}
\begin{aligned}
\mathcal X_L^{\mathrm{feas}}
:&=
\bigl\{\,x\in\mathbb R^n \ \big|\ 
x+\mathcal W \subseteq \hat{\mathcal X}_L^{\mathrm{feas}}
\bigr\}\\
 &=\ 
\hat{\mathcal X}_L^{\mathrm{feas}}\ \ominus\ \mathcal W.
\end{aligned}
\end{equation}
Thus, if \(x_0\in\mathcal X_L^{\mathrm{feas}}\), then the initial
measurement satisfies \(\hat x_0=x_0+w_0\in\hat{\mathcal X}_L^{\mathrm{feas}}\)
for all \(w_0\in\mathcal W\).
In view of Theorem~\ref{thm:rec_feas_shift}, the measurement sequence
$\hat x_k$ then remains in \(\hat{\mathcal X}_L^{\mathrm{feas}}\) and
the OCP~\eqref{eq:robust_deepc} admits a feasible solution at every sampling
instant.

The resulting robust data-driven predictive control scheme is implemented in a
receding-horizon fashion, as summarized in Algorithm~\ref{alg:TRDDPC}.

\begin{algorithm}[t]
  \caption{Tube-based robust data-driven predictive control (TRDDPC)}
  \label{alg:TRDDPC}
  \begin{algorithmic}[1]
    \STATE Measure the current noisy state $\hat x_k$ at time step $k$.
    \STATE Solve the OCP~\eqref{eq:robust_deepc}, obtaining an optimal solution
           $(Z_k^\star,V_k^\star,g^\star)$.
    \STATE Apply the input
           $
             u_k \;=\; v_{0|k}^\star + K\,\hat x_k.
           $
    \STATE Set $k \gets k{+}1$ and return to Step~1.
  \end{algorithmic}
\end{algorithm}

\subsection{Constraint Tightening}\label{AA}
The OCP~\eqref{eq:robust_deepc} generates at each time $k$ a prediction $(v_{[0|k,\,L-1|k]},\,z_{[0|k,\,L|k]})$ via a Hankel
parametrization. In closed loop, the plant evolves under the true
matrices $(A^\ast,B^\ast)$ and noisy state measurements, hence the
actual closed-loop state at time $k+\ell$ need not coincide with the
prediction $z_{\ell|k}$.

To guarantee constraint satisfaction despite this mismatch, we construct
data-driven tightened constraints by bounding the prediction error. The
key step is a three-term decomposition. Introducing two auxiliary
sequences $\tilde x_{\ell|k}$ and $\bar x_{\ell|k}$, we write for each
$\ell\in\mathbb N_{0:L}$,
\begin{equation}\label{eq:tube_chain_intro}
\begin{aligned}
z_{\ell|k}-x_{\ell|k}
&=
\underbrace{(z_{\ell|k}-\bar x_{\ell|k})}_{\text{offline mismatch}}
+\underbrace{(\bar x_{\ell|k}-\tilde x_{\ell|k})}_{\text{online initial mismatch}}\\
&\quad+\underbrace{(\tilde x_{\ell|k}-x_{\ell|k})}_{\text{measurement noise}}.
\end{aligned}
\end{equation}

Here $x_{\ell|k}$ and $\hat x_{\ell|k}$ denote plan-conditioned
signals associated with the input plan computed at time $k$.
Given the optimizer $(Z_k^\star,V_k^\star,g^\star)$ of
\eqref{eq:robust_deepc}, define $x_{0|k}:=x_k$ and
\begin{equation}\label{eq:meas_def_TAC}
\hat x_{\ell|k}:=x_{\ell|k}+w_{k+\ell},\qquad w_{k+\ell}\in\mathcal W,
\end{equation}
where $w_{k+\ell}$ denotes its sample at time $k+\ell$.
The plan-conditioned state sequence is generated under the fixed
feedforward plan $v_{\ell|k}^\star$ and feedback $K$ as
\begin{equation}\label{eq:x_dynamics_main}
x_{\ell+1|k}
=
A^\ast x_{\ell|k}+B^\ast\big(v_{\ell|k}^\star+K\hat x_{\ell|k}\big).
\end{equation}
This sequence is introduced solely for tube construction and analysis.
In the actual receding-horizon implementation, the feedforward input is
re-optimized at each time step; therefore, the resulting state at time
$k+\ell$ need not coincide with $x_{\ell|k}$ for $\ell\ge2$. Importantly,
since only the first input is applied, $x_{k+1}=x_{1|k}$.

Each bracketed term in~\eqref{eq:tube_chain_intro} admits a compact
polytopic bound. Using closure of polytopes under Minkowski sums and
Pontryagin differences yields an explicit compact polytopic bound on
$z_{\ell|k}-x_{\ell|k}$ (and on the induced input deviation) at each
prediction step. Consequently, the tightened constraints in
\eqref{eq:robust_deepc_state}--\eqref{eq:robust_deepc_input} remain
polytopic and the tightened OCP is a convex quadratic program.

To isolate the contribution of measurement noise, define $\tilde x_{\ell|k}$ by
\begin{equation}\label{eq:tildex_dynamics_main}
\tilde x_{\ell+1|k}
=
A^\ast \hat x_{\ell|k}+B^\ast\big(v_{\ell|k}^\star+K\hat x_{\ell|k}\big),
\qquad
\tilde x_{0|k}=\hat x_k,
\end{equation}
so that $\tilde x_{\ell|k}-x_{\ell|k}$ depends only on $w_{k+\ell}$.

To isolate the anchoring mismatch between $z_{\ell|k}$ and $\hat x_{\ell|k}$,
define $\bar x_{\ell|k}$ by
\begin{equation}\label{eq:barx_dynamics_main}
\bar x_{\ell+1|k}
=
A^\ast z_{\ell|k}+B^\ast\big(v_{\ell|k}^\star+K z_{\ell|k}\big),
\qquad
\bar x_{0|k}=z_{0|k},
\end{equation}
so that $\bar x_{\ell|k}-\tilde x_{\ell|k}$ captures the propagation of
the initial anchoring mismatch. Finally, $z_{\ell|k}-\bar x_{\ell|k}$ is
the offline mismatch, i.e., the discrepancy between the Hankel-based
predictor and the true plant under the same planned input
$v_{\ell|k}^\star+K z_{\ell|k}$.

In summary, the chain $z\to\bar x\to\tilde x\to x$ in
\eqref{eq:tube_chain_intro} separates offline mismatch, online initial
mismatch, and measurement noise.

To make the bounds computable from data, we next
show that the predicted sequence $z_{\ell|k}$ can be interpreted
as the state of a (possibly different) linear model $(A_\ell,B_\ell)$
that is consistent with the collected data in a finite window, up to a bounded additive disturbance. This provides the
bridge between the behavioral representation and an explicit state-space update.

We start with a regression model for the measured state. Under
Assumption~\ref{as:measurement noise}, with $\hat x_k = x_k + w_k$ and
$w_k \in \mathcal W$, the measured state satisfies
\begin{equation}\label{eq:regression_TAC_poly}
\hat x_{k+1}
=
A^\ast\,\hat x_k
+
B^\ast\,u_k
+
d_k,
\qquad
d_k := w_{k+1} - A^\ast w_k.
\end{equation}
To remove the dependence of $d_k$ on $A^\ast$, we upper-bound it using the gauge induced by $\mathcal W$. In particular, for any $w_k,w_{k+1}\in\mathcal W$ and any matrix $A$, it holds that
\begin{equation}\label{eq:d_norm_bound_poly}
\|w_{k+1} - A w_k\|_{\mathcal W}
\le
1 + \|A\|_{\mathcal W}.
\end{equation}
Hence, every admissible $d_k$ lies in the scaled polytope.
\begin{equation}\label{eq:D_gamma_def}
\mathcal D(\gamma)
:=
(1+\gamma)\mathcal W
=
\bigl\{\,d\in\mathbb{R}^n \,\big|\, G_w d \le (1+\gamma) h_w \,\bigr\},
\end{equation}
for any $\gamma$ satisfying $\gamma \ge \|A\|_{\mathcal W}$.

We now specialize~\eqref{eq:regression_TAC_poly} to the offline dataset
$\big(u^d_{[0{:}T{-}1]},\,\hat x^d_{[0{:}T{-}1]}\big)$ from
Assumption~\ref{as:available-trajectory}. For each $i\in\mathbb{N}_{0:T-2}$, consistency requires
\begin{equation}\label{eq:Gi_linear_ineq}
G_w\bigl(\hat x^d_{i+1} - A\hat x^d_i - B u^d_i\bigr)
\;\le\;
(1+\gamma) h_w,
\end{equation}
which defines the polytopic one-step consistency sets
\begin{equation}\label{eq:C_i_poly_def}
\mathcal C_i^{\mathrm{poly}}
:=
\Bigl\{ (A,B)\,\Big|\ 
G_w\bigl(\hat x^d_{i+1} - A\hat x^d_i - B u^d_i\bigr)
\le
(1+\gamma)\, h_w
\Bigr\}.
\end{equation}

Because the prediction horizon has length $L{+}1$, we only require consistency
over $L{+}1$-long windows aligned with the $\ell$-th block-rows of the Hankel matrices.
For each $\ell\in\mathbb{N}_{0:L-1}$, define the window-consistent model set
\begin{equation}\label{eq:I_set_poly_window}
\mathcal I_\ell^{\mathrm{poly}}
:=
\bigcap_{i=\ell}^{T-L-1+\ell} \mathcal C_i^{\mathrm{poly}}.
\end{equation}

We next show that every feasible Hankel-predicted trajectory admits a
one-step realization by some stepwise data-consistent pair
$(A_\ell,B_\ell)\in\mathcal I_\ell^{\mathrm{poly}}$
plus an additive residual bounded in $\mathcal D(\gamma)$.
\begin{proposition}\label{prop:window_model_realization_TAC_poly}
Fix $\ell\in\mathbb{N}_{0:L-1}$ and let $(A_\ell,B_\ell)\in\mathcal I_\ell^{\mathrm{poly}}$.
Let $g\in\mathbb{R}^{T-L}$ satisfy
\eqref{eq:robust_deepc_dyn}--\eqref{eq:robust_deepc_simplex}, in particular
$g\in\Delta^{T-L}$.
Let $\widetilde H_w$ be any admissible noise Hankel matrix constructed from a sequence in $\mathcal W$.
Then any predicted sequence generated by OCP~\eqref{eq:robust_deepc} satisfies
\begin{equation}\label{eq:z_dyn_window_TAC_poly}
z_{\ell+1|k}
=
A_\ell z_{\ell|k}
+
B_\ell u_{\ell|k}
+
\Delta_\ell(g),
\end{equation}
with residual
\begin{equation}\label{eq:Delta_def_TAC_poly}
\Delta_\ell(g)
:=
\widetilde H_{w_{[\ell+1]}} g
-
A_\ell\,\widetilde H_{w_{[\ell]}} g.
\end{equation}
Although $\Delta_\ell(g)$ depends on the stepwise choice of $A_\ell$,
the uniform bound below holds for all $(A_\ell,B_\ell)\in\mathcal I_\ell^{\mathrm{poly}}$:
\begin{equation}\label{eq:Delta_in_Dgamma}
\Delta_\ell(g)\ \in\ \mathcal D(\gamma)
\;=\;
(1+\gamma)\,\mathcal W.
\end{equation}
In particular, \eqref{eq:Delta_in_Dgamma} holds uniformly for all
$\ell\in\mathbb N_{0:L-1}$ and for any stepwise selection
$(A_\ell,B_\ell)\in\mathcal I_\ell^{\mathrm{poly}}$.
\end{proposition}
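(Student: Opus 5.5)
The plan is to work column by column on the Hankel matrices and then pass to the prediction by convexity, exactly as in the proof of Lemma~\ref{lem:block-agg-poly}.

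\emph{Step 1: express the predictions through data blocks.} From \eqref{eq:robust_deepc_dyn} we have $z_{\ell|k}=\hat H_{x_{[\ell]}}g$ and $z_{\ell+1|k}=\hat H_{x_{[\ell+1]}}g$. The applied predicted input satisfies
\[
u_{\ell|k}=v_{\ell|k}+Kz_{\ell|k}=(H_{u_{[\ell]}}-K\hat H_{x_{[\ell]}})g+K\hat H_{x_{[\ell]}}g=H_{u_{[\ell]}}g,
\]
so the feedback term cancels. Column $i\in\mathbb N_{0:T-L-1}$ of the $\ell$-th block rows contains the data triple $(\hat x^d_{i+\ell},u^d_{i+\ell},\hat x^d_{i+\ell+1})$. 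The data index $j:=i+\ell$ therefore ranges over $\mathbb N_{\ell:T-L-1+\ell}$, which is exactly the index range of the intersection in \eqref{eq:I_set_poly_window}.

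\emph{Step 2: one-step residual per column.} Since $(A_\ell,B_\ell)\in\mathcal I_\ell^{\mathrm{poly}}\subseteq\mathcal C_j^{\mathrm{poly}}$ for every such $j$, define
\[
r_j:=\hat x^d_{j+1}-A_\ell\hat x^d_j-B_\ell u^d_j .
\]
By \eqref{eq:C_i_poly_def}, $G_w r_j\le(1+\gamma)h_w$, i.e.\ $r_j\in\mathcal D(\gamma)$. Stacking gives the exact identity
\[
\hat H_{x_{[\ell+1]}}-A_\ell\hat H_{x_{[\ell]}}-B_\ell H_{u_{[\ell]}}=[\,r_\ell,\dots,r_{T-L-1+\ell}\,].
\]
Multiplying by $g$ yields
\[
z_{\ell+1|k}=A_\ell z_{\ell|k}+B_\ell u_{\ell|k}+\textstyle\sum_i g^i r_{i+\ell},
\]
which is \eqref{eq:z_dyn_window_TAC_poly} with $\Delta_\ell(g)=\sum_i g^i r_{i+\ell}$.

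\emph{Step 3: bound the residual.} Because $g\in\Delta^{T-L}$, $\Delta_\ell(g)$ is a convex combination of points of the convex polytope $\mathcal D(\gamma)=(1+\gamma)\mathcal W$. Hence $G_w\Delta_\ell(g)\le(1+\gamma)h_w$, by the same elementwise argument as in Lemma~\ref{lem:block-agg-poly}. This bound uses only membership of $(A_\ell,B_\ell)$ in $\mathcal I_\ell^{\mathrm{poly}}$, so it holds uniformly in $\ell$ and for any stepwise selection, giving \eqref{eq:Delta_in_Dgamma}.

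\emph{Step 4: identify the residual with the noise-Hankel form \eqref{eq:Delta_def_TAC_poly}.} This is the step I expect to be the main obstacle, because the statement quantifies over an arbitrary admissible $\widetilde H_w$ while the residual produced in Step 2 is specific to the data. My first attempt is to use the true noise, i.e.\ $\widetilde H_w=H_w$ with $\hat H_x=H_x+H_w$, together with the exact model relation $H_{x_{[\ell+1]}}=A^\ast H_{x_{[\ell]}}+B^\ast H_{u_{[\ell]}}$ of the noise-free data. One can then absorb the model mismatch $(A^\ast-A_\ell)H_{x_{[\ell]}}+(B^\ast-B_\ell)H_{u_{[\ell]}}$ into the residual.

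For the structured part $\widetilde H_{w_{[\ell+1]}}g-A_\ell\widetilde H_{w_{[\ell]}}g$ itself, Lemma~\ref{lem:block-agg-poly} puts both $\widetilde H_{w_{[\ell]}}g$ and $\widetilde H_{w_{[\ell+1]}}g$ in $\mathcal W$. The gauge bound \eqref{eq:d_norm_bound_poly} then gives $\|\cdot\|_{\mathcal W}\le 1+\|A_\ell\|_{\mathcal W}\le 1+\gamma$, provided $\gamma\ge\|A_\ell\|_{\mathcal W}$. I would therefore state this requirement on $\gamma$ explicitly, as the same standing choice of $\gamma$ used in \eqref{eq:D_gamma_def}, so that either representation of $\Delta_\ell(g)$ inherits the uniform bound \eqref{eq:Delta_in_Dgamma}.
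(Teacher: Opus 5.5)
\textbf{Gap: Step 4.} Your Steps 1--3 are correct. They already prove \eqref{eq:z_dyn_window_TAC_poly} and \eqref{eq:Delta_in_Dgamma} with $\Delta_\ell(g)=\sum_i g^i r_{i+\ell}$. Your convexity bound is also cleaner than the paper's gauge bound $1+\|A_\ell\|_{\mathcal W}\le 1+\gamma$, which tacitly needs $\|A_\ell\|_{\mathcal W}\le\gamma$; membership in $\mathcal I_\ell^{\mathrm{poly}}$ does not give that.

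What you never show is that this residual equals $\widetilde H_{w_{[\ell+1]}}g-A_\ell\widetilde H_{w_{[\ell]}}g$, which is what \eqref{eq:Delta_def_TAC_poly} asserts. Your true-noise substitution gives
\[
\sum_i g^i r_{i+\ell}
=\bigl(H_{w_{[\ell+1]}}-A_\ell H_{w_{[\ell]}}\bigr)g
+\bigl((A^\ast-A_\ell)H_{x_{[\ell]}}+(B^\ast-B_\ell)H_{u_{[\ell]}}\bigr)g,
\]
and the second term vanishes only when $(A_\ell,B_\ell)=(A^\ast,B^\ast)$. Absorbing it leaves a residual that is not of the form \eqref{eq:Delta_def_TAC_poly}. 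Bounding the structured part separately then says nothing about the actual residual.

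No argument can give the identity for \emph{every} admissible $\widetilde H_w$. The zero realization is admissible because $0\in\mathcal W$, so $\widetilde H_w=0$ is allowed. It would force $z_{\ell+1|k}=A_\ell z_{\ell|k}+B_\ell u_{\ell|k}$, which is false for noisy data in general.

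\textbf{What the paper does instead.} It takes the realization from the consistency set itself rather than from the true noise. It asserts that $(A_\ell,B_\ell)\in\mathcal C_j^{\mathrm{poly}}$ yields samples $w_j^{\mathrm{adm}},w_{j+1}^{\mathrm{adm}}\in\mathcal W$ with $r_j=w_{j+1}^{\mathrm{adm}}-A_\ell w_j^{\mathrm{adm}}$. It then builds $\widetilde H_w$ from these samples, so ``any'' is really ``some'', and the identity holds by construction.

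That assertion is itself unsupported. $\mathcal C_j^{\mathrm{poly}}$ only gives $r_j\in(1+\gamma)\mathcal W$, which is not contained in $\mathcal W-A_\ell\mathcal W$ in general; for $A_\ell=0$ the latter is just $\mathcal W$. In addition, the samples would have to chain consistently in $j$ to form a single Hankel matrix.

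\textbf{How to finish soundly.} There are two options.
\begin{itemize}
\item Take $\Delta_\ell(g):=\sum_i g^i r_{i+\ell}$ as the definition of the residual, with your Step 3 giving the uniform bound for every $(A_\ell,B_\ell)\in\mathcal I_\ell^{\mathrm{poly}}$.
\item Assert \eqref{eq:Delta_def_TAC_poly} only for $(A_\ell,B_\ell)=(A^\ast,B^\ast)$ and $\widetilde H_w=H_w$. There your substitution gives it exactly, and the bound follows from $\gamma^\ast\ge\|A^\ast\|_{\mathcal W}$. This is the only place the paper later uses the noise-Hankel form, in \eqref{eq:z_true_update_step3_short}--\eqref{eq:Delta_true_step3_short}.
\end{itemize}
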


\begin{proof}
Since $(A_\ell,B_\ell)\in\mathcal I_\ell^{\mathrm{poly}}$, we have
$(A_\ell,B_\ell)\in\mathcal C_i^{\mathrm{poly}}$ for all
$i\in\mathbb{N}_{\ell:T-L-1+\ell}$.
Hence, for each such $i$, there exists a noise sample
$w_i^{\mathrm{adm}}\in\mathcal W$ and a successor sample
$w_{i+1}^{\mathrm{adm}}\in\mathcal W$ such that the measured data
$(\hat x^d_i,u^d_i,\hat x^d_{i+1})$ satisfy
\begin{equation}\label{eq:one_step_TAC_poly_proof}
\hat x^d_{i+1}
=
A_\ell\, \hat x^d_i
+
B_\ell\, u^d_i
+
d_i,
\qquad
d_i := w_{i+1}^{\mathrm{adm}} - A_\ell w_i^{\mathrm{adm}}.
\end{equation}

Stacking \eqref{eq:one_step_TAC_poly_proof} over the indices in the window
and multiplying by $g$ yields

\begin{equation}\label{eq:Hankel_window_stack_poly}
\hat H_{x_{[\ell+1]}} g
=
A_\ell\,\hat H_{x_{[\ell]}} g
+
B_\ell\,H_{u_{[\ell]}} g
+
\sum_{i=\ell}^{T-L-1+\ell} g^i d_i.
\end{equation}
Using the Hankel parametrization in the OCP,
\[
\begin{aligned}
z_{\ell|k} &= \hat H_{x_{[\ell]}} g,
\ 
z_{\ell+1|k} = \hat H_{x_{[\ell+1]}} g,\\
v_{\ell|k}
&=
\big(H_{u_{[\ell]}} - K \hat H_{x_{[\ell]}}\big)g,
\ 
u_{\ell|k}
= v_{\ell|k} + K z_{\ell|k}
= H_{u_{[\ell]}} g,
\end{aligned}
\]
we can rewrite \eqref{eq:Hankel_window_stack_poly} as
\[
z_{\ell+1|k}
=
A_\ell\, z_{\ell|k}
+
B_\ell\, u_{\ell|k}
+
\sum_{i=\ell}^{T-L-1+\ell} g^i d_i.
\]

On the other hand, by construction of the admissible noise Hankel matrix
$\widetilde H_w$, each column of $\widetilde H_{w_{[\ell]}}$ and
$\widetilde H_{w_{[\ell+1]}}$ lies in $\mathcal W$.
Since $g$ satisfies the simplex constraints, Lemma~\ref{lem:block-agg-poly}
implies
\[
\widetilde H_{w_{[\ell]}} g \in \mathcal W,
\qquad
\widetilde H_{w_{[\ell+1]}} g \in \mathcal W.
\]
Hence the disturbance term can equivalently be written as
\[
\sum_{i=\ell}^{T-L-1+\ell} g^i d_i
=
\widetilde H_{w_{[\ell+1]}} g
-
A_\ell\,\widetilde H_{w_{[\ell]}} g
=:\Delta_\ell(g),
\]
which gives \eqref{eq:z_dyn_window_TAC_poly} and
\eqref{eq:Delta_def_TAC_poly}.

Finally, using $\widetilde H_{w_{[\ell]}} g,\widetilde H_{w_{[\ell+1]}} g\in\mathcal W$
and the induced-norm bound,
\[
\|\Delta_\ell(g)\|_{\mathcal W}
=
\|\widetilde H_{w_{[\ell+1]}} g - A_\ell\,\widetilde H_{w_{[\ell]}} g\|_{\mathcal W}
\le
1+\|A_\ell\|_{\mathcal W}
\le 1+\gamma,
\]
so $\Delta_\ell(g)\in(1+\gamma)\mathcal W=\mathcal D(\gamma)$,
which proves \eqref{eq:Delta_in_Dgamma}.
\end{proof}

Proposition~\ref{prop:window_model_realization_TAC_poly} shows that each
Hankel-predicted step $z_{\ell|k}\mapsto z_{\ell+1|k}$ is consistent with a
one-step linear update under any data-consistent
$(A_\ell,B_\ell)\in\mathcal I_\ell^{\mathrm{poly}}$ and the planned input $u_{\ell|k}$, up to an additive residual
$\Delta_\ell(g)\in\mathcal D(\gamma)$.
This representation, with a fixed compact polytopic disturbance set that is
independent of $(z_{\ell|k},u_{\ell|k})$ and of the feasible $g$, is key for the
polytopic tube-tightening developed next.

Next, we bound each bracketed term in~\eqref{eq:tube_chain_intro} by a compact polytope
$
\mathcal E_{w}$,
$\mathcal E_{\mathrm{on}}$ and
$\mathcal E_{\mathrm{off}}
$
corresponding, respectively, to the ``online measurement noise'', the ``online
initial mismatch'' and the ``offline mismatch''.
\paragraph*{Step~1: Online measurement noise ($\tilde x_{\ell|k}-x_{\ell|k}$)}
We first compare $ x_{\ell|k}$ to $\tilde x_{\ell|k}$
Subtracting~\eqref{eq:tildex_dynamics_main} from
\eqref{eq:x_dynamics_main} yields
\begin{equation}\label{eq:tilde_minus_x_TAC}
\tilde x_{\ell+1|k}-x_{\ell+1|k}
=
A^\ast(\hat x_{\ell|k}-x_{\ell|k})
=
A^\ast w_{k+\ell},
\end{equation}
and, at $\ell=0$,
$\tilde x_{0|k}-x_{0|k}=\hat x_k-x_k=w_k$.
Therefore, for every $\ell\in\mathbb{N}_{1:L}$,
\begin{equation}\label{eq:tilde_minus_x_bound_TAC}
\tilde x_{\ell|k}-x_{\ell|k}
\in
A^\ast\mathcal W.
\end{equation}
Let $\gamma^\ast \ge \|A^\ast\|_{\mathcal W}$ denote the data-consistent induced-gain bound
computed by the fixed-point certification procedure in~\cite{WangAngeli2026CDCsub}. Then~\eqref{eq:tilde_minus_x_bound_TAC}
implies, uniformly over all feasible closed-loop evolutions,
\begin{equation}\label{eq:tilde_minus_x_poly_bound_short}
\tilde x_{\ell|k}-x_{\ell|k}
\in
A^\ast \mathcal W
\ \subseteq\ 
\gamma^\ast\mathcal W
\;=:\;
\mathcal E_{w},
\qquad
\forall\,\ell \in \mathbb{N}_{1:L}.
\end{equation}

\paragraph*{Step~2: Online initial mismatch ($\bar x_{\ell|k}-\tilde x_{\ell|k}$)}
We now compare $\tilde x_{\ell|k}$ to $\bar x_{\ell|k}$. Subtracting \eqref{eq:barx_dynamics_main} from
\eqref{eq:tildex_dynamics_main} at the same stage $\ell$ gives
\begin{equation}\label{eq:xbar_minus_xtilde_TAC}
\bar x_{\ell+1|k}-\tilde x_{\ell+1|k}
=
A^\ast\big(z_{\ell|k}-\hat x_{\ell|k}\big)
+
B^\ast K \big(z_{\ell|k}-\hat x_{\ell|k}\big).
\end{equation}
Define the error
\begin{equation}\label{eq:e_def_TAC}
e_{\ell|k} := z_{\ell|k}-\hat x_{\ell|k}, \qquad \forall\,\ell\in\mathbb N_{0:L-1}.
\end{equation}
Then \eqref{eq:xbar_minus_xtilde_TAC} becomes
\begin{equation}\label{eq:xbar_minus_xtilde_TAC_e}
\bar x_{\ell+1|k}-\tilde x_{\ell+1|k}
=
(A^\ast+B^\ast K)\, e_{\ell|k}.
\end{equation}

In our robust data-driven predictive control formulation, we impose the
anchoring constraint
$
z_{0|k}-\hat x_k \in \mathcal E,
$
where $\mathcal E$ is a compact polytope chosen (see
Section~\ref{sec:rec-feas}) to be robust positively invariant for the
prediction-error propagation induced by the stabilizing gain $K$.
Combining~\eqref{eq:state_map_TAC_refined}--\eqref{eq:e_set_dyn_refined} yields a set-valued recursion for $e_{\ell|k}$. Hence, if $\mathcal E$
satisfies the RPI condition \eqref{eq:RPI_condition_TAC_refined} for this
recursion, then $e_{0|k}\in\mathcal E$ implies $e_{\ell|k}\in\mathcal E$
for all $\ell\in\mathbb N_{0:L}$ by induction on $\ell$. Hence,
\begin{equation}\label{eq:xbar_minus_xtilde_bound_TAC}
\bar x_{\ell+1|k}-\tilde x_{\ell+1|k}
 \in (A^\ast{+}B^\ast K)\,\mathcal E,
\qquad \forall\,\ell\in\mathbb N_{0:L-1}.
\end{equation}

The inclusion \eqref{eq:xbar_minus_xtilde_bound_TAC} is expressed in terms of the true (unknown) matrices
$(A^\ast,B^\ast)$.It is therefore natural to robustify the error dynamics by considering the family of closed-loop matrices $\{A+BK\}$. For the purpose of bounding~\eqref{eq:xbar_minus_xtilde_bound_TAC}, we define the full
data-consistency set
\begin{equation}\label{eq:I_full_TAC}
\mathcal I_{\mathrm{full}}^{\mathrm{poly}}
:=
\bigcap_{i=0}^{T-2} \mathcal C_i^{\mathrm{poly}},
\end{equation}
which exploits the entire offline dataset,
thereby avoiding additional conservatism. To obtain a nonempty and compact polytopic set, we fix the data-driven parameter
$\gamma=\gamma^\ast$. This choice removes the homogeneity and associated directions of recession in the
descriptions of $\mathcal C_i^{\mathrm{poly}}$ and $\mathcal I_{\mathrm{full}}^{\mathrm{poly}}$ and renders
$\mathcal I_{\mathrm{full}}^{\mathrm{poly}}$ a strict, computable polytope. We then define the one-step
consistency sets
\begin{equation}\label{eq:C_i_star_TAC}
\begin{aligned}
\mathcal C_i^{\mathrm{poly},\ast}
&:=
\Bigl\{ \!(A,B)\Big|
G_w\bigl(\hat x^d_{i+1}-A\hat x^d_i-B u^d_i\bigr)\!\le\! (1+\gamma^\ast) h_w
\!\Bigr\},\\
\mathcal I_{\mathrm{full}}^{\mathrm{poly},\ast}
&:=
\bigcap_{i=0}^{T-2}\mathcal C_i^{\mathrm{poly},\ast}.
\end{aligned}
\end{equation}
Because $(A^\ast,B^\ast)$ generated the dataset under Assumption~\ref{as:measurement noise}, it belongs to
$\mathcal I_{\mathrm{full}}^{\mathrm{poly},\ast}$.

\begin{assumption}\label{as:fullrowrank}
Let $X_0=[\hat x^d_0\ \cdots\ \hat x^d_{T-2}]$, $X_1=[\hat x^d_1\ \cdots\ \hat x^d_{T-1}]$, and
$U_0=[u^d_0\ \cdots\ u^d_{T-2}]$. We assume that the stacked data matrix
\[
Z_0=\begin{bmatrix} X_0\\ U_0 \end{bmatrix}
\in \mathbb{R}^{(n+m)\times (T-1)}
\]
has full row rank, i.e.,
\(
\operatorname{rank}\big(Z_0\big)=n+m.
\)
\end{assumption}

Under Assumptions~\ref{as:available-trajectory},~\ref{as:measurement noise} and~\ref{as:fullrowrank},
the data-consistency set $\mathcal I_{\mathrm{full}}^{\mathrm{poly},\ast}$ defined in~\eqref{eq:C_i_star_TAC}
is a nonempty, compact polyhedron, and hence a polytope~\cite{WangAngeli2026CDCsub}.
We propagate this set through the closed–loop affine map. Let
\begin{equation}\label{eq:A_K_poly_def}
{\mathcal A}_K
:= \bigl\{\, A{+}BK \ \big|\ (A,B)\in \mathcal I_{\mathrm{full}}^{\mathrm{poly},\ast}\,\bigr\},
\end{equation}
which is a polytope in the matrix space by linearity of the map
$(A,B)\mapsto A{+}BK$. Denote the image of $\mathcal E$ under ${\mathcal A}_K$ by
\begin{equation}\label{eq:robustimage}
{\mathcal A}_K \mathcal E
:=
\bigl\{\, (A{+}BK)e \ \big|\ A{+}BK\in {\mathcal A}_K,\ e\in \mathcal E \bigr\}.
\end{equation}
Since ${\mathcal A}_K$ is a polytope in the matrix space, it admits a finite
vertex set $\mathrm{vert}({\mathcal A}_K)=\{A_K^{(j)}\}_{j=1}^{N_K}$, and thus
\[
\operatorname{co}\!\bigl({\mathcal A}_K \mathcal E\bigr)
=
\operatorname{co}\!\Big(\bigcup_{j=1}^{N_K} A_K^{(j)}\mathcal E\Big),
\]
which is a polytope as the convex hull of a finite union of polytopes. We thus define the
online–mismatch tube
\begin{equation}\label{eq:Eon_poly_def}
\mathcal E_{\mathrm{on}}
:=
\operatorname{co}\!\bigl({\mathcal A}_K \mathcal E\bigr),
\end{equation}
and conclude from \eqref{eq:xbar_minus_xtilde_bound_TAC} that
\begin{equation}\label{eq:step2tight}
\bar x_{\ell+1|k}-\tilde x_{\ell+1|k}
\ \in\ 
\operatorname{co}\!\bigl({\mathcal A}_K \mathcal E\bigr)
=\mathcal E_{\mathrm{on}},
\qquad
\forall\,\ell\in\mathbb N_{0:L-1}.
\end{equation}
This robustification replaces the unknown $(A^\ast,B^\ast)$ by the
data-consistent polytope ${\mathcal A}_K$ and yields a strict, computable
polytopic bound.

\paragraph*{Step~3: Offline mismatch ($z_{\ell|k}-\bar x_{\ell|k}$)}
Finally, we compare the predicted sequence $z_{\ell|k}$ to $\bar x_{\ell|k}$. Subtracting~\eqref{eq:z_dyn_window_TAC_poly} from~\eqref{eq:barx_dynamics_main} gives
\begin{equation}\label{eq:z_minus_xbar_TAC}
z_{\ell+1|k}-\bar x_{\ell+1|k}
=
(A_\ell-A^\ast)\, z_{\ell|k}
+
(B_\ell-B^\ast)\, u_{\ell|k}
+
\Delta_\ell(g).
\end{equation}

To obtain a least conservative tightening, we now refine how the prediction
error is bounded. The Hankel matrices $(\hat H_x,H_u)$ employed online are
constructed from the fixed noisy trajectory in
Assumption~\ref{as:available-trajectory}, which is generated by the true pair
$(A^\ast,B^\ast)$ and corrupted by the measurement noise
$w^d_{[0{:}T{-}1]}$. For this specific dataset, the true pair
$(A^\ast,B^\ast)$ is feasible in every sliding window of length $L{+}1$.
Hence $(A^\ast,B^\ast)\in \mathcal I_\ell^{\mathrm{poly}}$ for all $\ell$,
and the ``admissible'' noise Hankel $\widetilde H_w$ can be chosen as the
(unknown but fixed) actual noise Hankel $H_w$ associated with that same
trajectory. Therefore, we may specialize~\eqref{eq:z_dyn_window_TAC_poly} to
$(A_\ell,B_\ell)=(A^\ast,B^\ast)$ and $\widetilde H_w = H_w$, which yields
the true-model realization
\begin{equation}\label{eq:z_true_update_step3_short}
z_{\ell+1|k}
=
A^\ast z_{\ell|k}
+
B^\ast u_{\ell|k}
+
\Delta_\ell(g),
\end{equation}
where now
\begin{equation}\label{eq:Delta_true_step3_short}
\Delta_\ell(g)
=
H_{w_{[\ell+1]}} g
-
A^\ast H_{w_{[\ell]}} g.
\end{equation}
Under this specialization, \eqref{eq:z_minus_xbar_TAC} reduces to
\begin{equation}\label{eq:z_minus_barx_recursive_short}
z_{\ell+1|k} - \bar x_{\ell+1|k}
=
\Delta_\ell(g),
\qquad
z_{0|k} - \bar x_{0|k} = 0.
\end{equation}
In particular, the explicit mismatch terms
$(A_\ell-A^\ast)$ and $(B_\ell-B^\ast)$ in
\eqref{eq:z_minus_xbar_TAC} vanish once we identify
$(A_\ell,B_\ell)$ with $(A^\ast,B^\ast)$. This specialization is used only for analysis (it exploits the fact that the
fixed offline dataset was generated by $(A^\ast,B^\ast)$) and yields the least
conservative bound compatible with that dataset.

Here, using the data-driven value $\gamma^\ast$, we obtain
\begin{equation}\label{eq:Delta_true_bound_step3_short}
 \Delta_\ell(g)\in(1+\gamma^\ast)\mathcal W,
\qquad
\forall\,\ell \in \mathbb{N}_{0:L-1}.
\end{equation}
By~\eqref{eq:Delta_true_bound_step3_short}, we obtain the
\begin{equation}\label{eq:step3_pointwise_bound_short}
z_{\ell|k} - \bar x_{\ell|k}
\in (1+\gamma^\ast)\mathcal W=:\mathcal E_{\mathrm{off}},
\qquad
\forall\,\ell \in \mathbb{N}_{1:L}.
\end{equation}
By construction,
\[
\begin{aligned}
z_{\ell|k}-x_{\ell|k}
&=
(z_{\ell|k}-\bar x_{\ell|k})
+
(\bar x_{\ell|k}-\tilde x_{\ell|k})
+
(\tilde x_{\ell|k}-x_{\ell|k}) \\
&\in
\mathcal E_{\mathrm{off}}
\oplus
\mathcal E_{\mathrm{on}}
\oplus
\mathcal E_{w},
\qquad
\forall\,\ell \in \mathbb{N}_{1:L}.
\end{aligned}
\]
Therefore, we introduce a tightened admissible
state set $\mathcal Z$ such that
\begin{equation}\label{eq:Zl_def_TAC_final}
\begin{aligned}
z_{\ell|k}
\in
\mathcal Z
:=
\mathcal X \ominus \mathcal E_{\mathrm{off}}
\ominus
\mathcal E_{\mathrm{on}}
\ominus
\mathcal E_{w}
\Longrightarrow
\\
x_{\ell|k}\in\mathcal X,
\qquad
\forall\, \ell \in \mathbb{N}_{1:L}.
\end{aligned}
\end{equation}

An analogous tightening applies to the input. The OCP constrains the
planned closed-loop input $
u_{\ell|k}^{\mathrm{p}}
:=
v_{\ell|k}+K z_{\ell|k}
$,
whereas the actually applied input is
$
u_{\ell|k}^{\mathrm{a}}
:=
v_{\ell|k}+K \hat x_{\ell|k}
$.
Their difference is
\[
u_{\ell|k}^{\mathrm{p}}
-
u_{\ell|k}^{\mathrm{a}}
=
K\big(z_{\ell|k}-\hat x_{\ell|k}\big)
=
K e_{\ell|k}.
\]
Since $e_{\ell|k}\in\mathcal E$ by design
(see Section~\ref{sec:rec-feas}), the image $K\mathcal E$ is a polytope
bounding this mismatch. We therefore introduce a tightened admissible
input set $\hat{\mathcal U}$ such that
\begin{equation}\label{eq:Ul_def_TAC_final}
\begin{aligned}
u_{\ell|k}
=
v_{\ell|k}+K z_{\ell|k}
\in
\hat{\mathcal U}
:=
\mathcal U
\ominus K\mathcal E
\Longrightarrow\\
v_{\ell|k}+K \hat x_{\ell|k} \in \mathcal U,
\qquad
\forall\, \ell \in \mathbb{N}_{0:L-1}.
\end{aligned}
\end{equation}

The tightened state and input sets
$\mathcal Z$ and $\hat{\mathcal U}$ also
prepare the ground for the following section, where we synthesize the
robust positively invariant polytope $\mathcal E$ and prove recursive feasibility, closed-loop constraint satisfaction and practical ISS. Throughout, we assume that the tightened sets are nonempty.
\begin{assumption}\label{as:nonempty_tightened}
The tightened state and input sets satisfy
\[
\mathcal Z \neq \varnothing,
\qquad
\hat{\mathcal U} \neq \varnothing.
\]
\end{assumption}

Assumption~\ref{as:nonempty_tightened} is mild and is typically satisfied
whenever the disturbance bound is sufficiently small relative to the
constraint margins.
\begin{remark}\label{rem:gamma-star}
The conservatism of the proposed tightening is introduced only by
(i) the bound $\gamma^\ast \ge \|A^\ast\|_{\mathcal W}$ and
(ii) the replacement of $(A^\ast,B^\ast)$ by $\mathcal I_{\mathrm{full}}^{\mathrm{poly},\ast}$. In~\cite{WangAngeli2026CDCsub}, we demonstrate that the resulting additional conservatism is mild for sufficiently
informative data and moderate noise levels.
\end{remark}

\subsection{Recursive Feasibility}\label{sec:rec-feas}
In this section, we show that feasibility at time $k$ implies feasibility at
time $k{+}1$ for the proposed data-driven predictive control scheme. The proof
follows the standard receding-horizon shift argument with two key ingredients:
(i) a terminal set $\mathcal Z_f$ that is robust positively invariant (RPI) for
a local affine feedback law $u = K z + v$ with $v$ confined to a prescribed
polytope $\mathcal V$, and (ii) an initial RPI set $\mathcal E$ that absorbs
measurement noise and prediction mismatch at each step.

We define the terminal set as a data-driven RPI polytope under a local
affine feedback law
\begin{equation}\label{eq:terminal_affine_law}
u \;=\; K z + v,\qquad v\in\mathcal V,
\end{equation}
where $\mathcal V\subset\mathbb R^{m}$ is a prescribed compact convex polytope.
Recall that each Hankel-predicted step $z_{\ell|k}\mapsto z_{\ell+1|k}$ is
consistent with a one-step linear update under some
$(A,B)\in\mathcal I_{\mathrm{full}}^{\mathrm{poly},\ast}$ and the planned input
$u^p_{\ell|k}$, up to an additive residual $\Delta_\ell(g)\in (1+\gamma^\ast)\mathcal W$.

In a standard shift argument one would append the pure local law $u=Kz$ at the
terminal step. Under the simplex restriction $g\in\Delta^{T-L}$, however, this
append is not free: the deviation induced by the last block row of the Hankel
parametrization is
\[
v(g):=\bigl(H_{u_{[L]}}-K\hat H_{x_{[L]}}\bigr)g
\in
\operatorname{co}\Bigl\{
\bigl(H_{u_{[L]}}-K\hat H_{x_{[L]}}\bigr)e_i
\Bigr\}_{i=1}^{T-L}.
\]
Thus, $v(g)=0$ (and hence the pure law $u=Kz$) cannot be guaranteed for all
feasible $g$. This is critical for recursive feasibility: if $g^\star$ is
optimal at time $k$ and $g^+$ denotes the shifted coefficient at time $k{+}1$,
then the appended deviation in the candidate at time $k{+}1$ satisfies
\[
v_{L-1|k+1}^c
=\bigl(H_{u_{[L-1]}}-K\hat H_{x_{[L-1]}}\bigr)g^+
=\bigl(H_{u_{[L]}}-K\hat H_{x_{[L]}}\bigr)g^\star,
\]
where the last equality follows from the coefficient shift map. We therefore
admit a bounded terminal deviation $v\in\mathcal V$ in~\eqref{eq:terminal_affine_law}
and require that all simplex-admissible terminal deviations generated by the
last block row are contained in $\mathcal V$.

\begin{assumption}\label{as:TCC_tail_V}
Fix a compact convex polytope $\mathcal V\subset\mathbb R^{m}$.
The offline dataset satisfies
\begin{equation}\label{eq:V_tail_simplex}
\bigl(H_{u_{[L]}}-K\hat H_{x_{[L]}}\bigr)g \ \in\ \mathcal V,
\qquad \forall\, g\in\Delta^{T-L}.
\end{equation}
Equivalently,
\begin{equation}\label{eq:V_tail_vertices}
\bigl(H_{u_{[L]}}-K\hat H_{x_{[L]}}\bigr)e_i \ \in\ \mathcal V,
\qquad i=1,\ldots,T-L.
\end{equation}
\end{assumption}

Assumption~\ref{as:TCC_tail_V} reduces to a finite set of vertex checks in
\eqref{eq:V_tail_vertices}. In particular, the size of $\mathcal V$ can be shaped
directly at the data-collection stage by controlling the last-row deviations
$\bigl(H_{u_{[L]}}-K\hat H_{x_{[L]}}\bigr)e_i$. See Appendix~\ref{app:TCC_tail_verify}
for an offline data-collection procedure (Algorithm~\ref{alg:TCC}) guaranteeing
Assumption~\ref{as:TCC_tail_V}.

In particular, at the terminal step the predicted closed-loop dynamics under
\eqref{eq:terminal_affine_law} can be written as
\begin{equation}\label{eq:terminal_dyn_affine}
z^{+} \in \mathcal A_K z \ \oplus\ \Omega,
\qquad
\Omega := (1+\gamma^\ast)\mathcal W \ \oplus\ \mathcal B\,\mathcal V,
\end{equation}
where $\mathcal A_K:=\{A{+}BK\mid (A,B)\in\mathcal I_{\mathrm{full}}^{\mathrm{poly},\ast}\}$ and
$\mathcal B:=\{B\mid (A,B)\in\mathcal I_{\mathrm{full}}^{\mathrm{poly},\ast}\}$.
Since $\mathcal I_{\mathrm{full}}^{\mathrm{poly},\ast}$ is a compact polytope in the matrix space,
$\mathcal B$ is compact and
\[
\mathcal B\,\mathcal V
:= \operatorname{co}\{Bv\mid B\in\mathcal B,\ v\in\mathcal V\}
= \operatorname{co}\Bigl(\bigcup_{j=1}^{N_v} B^{(j)}\mathcal V\Bigr)
\]
is a compact convex polytope.

Let $\mathcal M_\infty$ denote the minimal RPI set for~\eqref{eq:terminal_dyn_affine}, i.e.,
\[
\mathcal M_\infty
:= \bigoplus_{i=0}^{\infty}
\operatorname{co}\bigl(\mathcal A_K^{\,i}\Omega\bigr).
\]
The gain $K$ should enforce the uniform quadratic decrease
\begin{equation}\label{eq:uniform_quadratic_decrease}
(A{+}BK)\,P\,(A{+}BK)^\top-P \preceq -\beta I,
\ \forall\ (A,B)\in \mathcal I_{\mathrm{full}}^{\mathrm{poly},\ast},
\end{equation}
for some $P=P^\top\!\succ0$ and $\beta>0$.
This implies that the family $\mathcal A_K$ is uniformly exponentially stable.
Together with the compact set $\Omega$, it follows that $\mathcal M_\infty$ is
well-defined, convex, and compact.

Since $\mathcal I_{\mathrm{full}}^{\mathrm{poly},\ast}$ is a compact polytope
in the matrix space, it admits a finite vertex set
\[
\operatorname{vert}\!\bigl(\mathcal I_{\mathrm{full}}^{\mathrm{poly},\ast}\bigr)
=\{(A^{(j)},B^{(j)})\}_{j=1}^{N_v}.
\]
Moreover, the LMI constraint in~\eqref{eq:uniform_quadratic_decrease} is affine in $(A,B)$ for fixed
decision variables $(P,Y,\beta)$, and $\mathbb S^{n}_{\succeq 0}$ is convex. Hence,
enforcing~\eqref{eq:uniform_quadratic_decrease} on all vertices yields a lossless SDP.

\begin{proposition}\label{prop:K-stab}
If there exist $P=P^\top\succ0$, $Y\in\mathbb{R}^{m\times n}$, and $\beta>0$ such that
\begin{equation}\label{eq:poly_decay_SDP}
\begin{aligned}
&\begin{bmatrix}
P-\beta I & S^{(j)}\\
\bigl(S^{(j)}\bigr)^\top & P
\end{bmatrix}\succeq 0,\qquad \forall\, j \in \mathbb{N}_{1:N_v},\\[-.25em]
& S^{(j)} := A^{(j)}P + B^{(j)}Y,
\end{aligned}
\end{equation}
then with $K:=Y P^{-1}$ we have
\[
(A{+}BK)\,P\,(A{+}BK)^\top - P + \beta I \preceq 0,
\ \forall\ (A,B)\in\mathcal I_{\mathrm{full}}^{\mathrm{poly},\ast}.
\]
In particular, the true closed-loop matrix $A^\ast{+}B^\ast K$ is Schur.
\end{proposition}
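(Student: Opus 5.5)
The plan is a Schur-complement step at each vertex, followed by a convexity argument that extends the vertex inequalities to the whole polytope $\mathcal I_{\mathrm{full}}^{\mathrm{poly},\ast}$.

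\emph{Step 1: vertex inequalities.} Fix a vertex index $j$. The lower-right block $P$ of the LMI~\eqref{eq:poly_decay_SDP} is positive definite, so the Schur complement lemma shows that the LMI is equivalent to
\[
P-\beta I - S^{(j)}P^{-1}\bigl(S^{(j)}\bigr)^\top \succeq 0 .
\]
Because $P\succ0$, the gain $K:=YP^{-1}$ is well defined and $Y=KP$. Hence
\[
S^{(j)}=A^{(j)}P+B^{(j)}KP=(A^{(j)}+B^{(j)}K)P ,
\]
and substituting gives
\[
S^{(j)}P^{-1}\bigl(S^{(j)}\bigr)^\top=(A^{(j)}+B^{(j)}K)P(A^{(j)}+B^{(j)}K)^\top .
\]
So the desired inequality holds at every vertex.

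\emph{Step 2: extension to the polytope.} This is the one nontrivial point, since the target expression is quadratic in $(A,B)$ and vertex checks do not transfer automatically. I would not interpolate the quadratic form directly. Instead I would use that the matrix
\[
M(A,B):=\begin{bmatrix} P-\beta I & (A+BK)P\\ P(A+BK)^\top & P\end{bmatrix}
\]
is \emph{affine} in $(A,B)$ for the fixed $P$ and $K$. The polytope $\mathcal I_{\mathrm{full}}^{\mathrm{poly},\ast}$ is compact (as recalled above, citing~\cite{WangAngeli2026CDCsub}), so every $(A,B)$ in it is a convex combination $\sum_j\lambda_j(A^{(j)},B^{(j)})$ of its vertices. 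Affinity gives $M(A,B)=\sum_j\lambda_j M(A^{(j)},B^{(j)})\succeq0$, because the cone $\mathbb S^{2n}_{\succeq0}$ is convex. Applying the Schur complement once more yields
\[
(A+BK)P(A+BK)^\top-P+\beta I\preceq0
\]
for every $(A,B)$ in the set.

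\emph{Step 3: the true system.} As established before~\eqref{eq:C_i_star_TAC}, $(A^\ast,B^\ast)\in\mathcal I_{\mathrm{full}}^{\mathrm{poly},\ast}$, so the inequality holds for $A_{\mathrm{cl}}:=A^\ast+B^\ast K$. Let $\lambda$ be an eigenvalue of $A_{\mathrm{cl}}^\top$ with eigenvector $x\neq0$. Multiplying the inequality by $x^\ast$ on the left and by $x$ on the right gives
\[
(|\lambda|^2-1)\,x^\ast Px\le-\beta\|x\|_2^2<0 .
\]
Since $x^\ast Px>0$, this forces $|\lambda|<1$. The matrices $A_{\mathrm{cl}}$ and $A_{\mathrm{cl}}^\top$ have the same spectrum, so $A^\ast+B^\ast K$ is Schur.
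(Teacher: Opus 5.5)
Your proof is correct and follows the same route as the paper: a Schur complement at each vertex, then convexity of the block LMI (which is affine in $(A,B)$ for fixed $P$ and $Y=KP$) to pass from the vertices to all of $\mathcal I_{\mathrm{full}}^{\mathrm{poly},\ast}$, then specialization to $(A^\ast,B^\ast)$, which lies in that set. The paper compresses the first two steps into a citation of its companion work and a remark that enforcing the LMI at the vertices is lossless, and treats the Schur property as immediate, so your explicit vertex-to-polytope step and eigenvector argument fill in details the paper leaves implicit.
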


\begin{proof}
The claim follows from Proposition~2 in~\cite{WangAngeli2026CDCsub}. Since $(A^\ast,B^\ast)\in\mathcal I_{\mathrm{full}}^{\mathrm{poly},\ast}$, the inequality
\eqref{eq:uniform_quadratic_decrease} holds at the true pair, i.e.,
\[
(A^\ast{+}B^\ast K) P (A^\ast{+}B^\ast K)^\top-P \preceq -\beta I \prec 0,
\]
which implies that $A^\ast{+}B^\ast K$ is Schur.
\end{proof}

To complete the terminal-set construction, we require the existence of a
compact convex polytope that contains the minimal RPI set and is compatible
with the tightened state and input constraints.

\begin{assumption}\label{as:terminal_set}
There exists a compact convex polytope $\mathcal Z_f$ such that
\begin{subequations}\label{eq:terminal_set_final}
\begin{align}
\mathcal M_\infty \subseteq \mathcal Z_f &\subseteq \mathcal Z,
\label{eq:terminal_Minf_subset}
\\
K\mathcal Z_f \oplus \mathcal V &\subseteq \hat{\mathcal U},
\label{eq:terminal_input_tight}
\\
\mathcal A_K \mathcal Z_f \oplus \Omega
&\subseteq \mathcal Z_f.
\label{eq:terminal_RPI_condition}
\end{align}
\end{subequations}
\end{assumption}

Condition~\eqref{eq:terminal_RPI_condition} states that $\mathcal Z_f$ is RPI
for the closed-loop family $\mathcal A_K$ under the compact set
$\Omega$, while \eqref{eq:terminal_Minf_subset}--\eqref{eq:terminal_input_tight}
enforce compatibility with the tightened state and input constraints at the
terminal step.

By construction, if the terminal nominal state $z_{L|k}$ lies in $\mathcal Z_f$,
then under the local law $u = K z + v$ with any $v\in\mathcal V$ every successor
remains in $\mathcal Z_f$ and the tightened input constraint
$Kz+v\in\hat{\mathcal U}$ is satisfied for all future times and all
$(A,B)\in\mathcal I_{\mathrm{full}}^{\mathrm{poly},\ast}$ and residuals
$\Delta_\ell(g)\in (1+\gamma^\ast)\mathcal W$. Therefore, choosing $\mathcal Z_f$
as terminal set provides a robust data-driven terminal RPI.

In noise-free MPC it is standard to enforce $z_{0|k} = \hat x_k$. Under
measurement noise, however, simply fixing $z_{0|k} = \hat x_k$ may destroy
recursive feasibility, because $\hat x_k = x_k + w_k$ is noisy and the
implemented input is $u = v + K \hat x$, not $v + K z$. We therefore treat
$z_{0|k}$ as a decision variable and impose the initial anchoring constraint
\begin{equation}\label{eq:anchor_decision_TAC_refined}
z_{0|k} - \hat x_k \in \mathcal E,
\end{equation}
where $\mathcal E$ will be chosen to be RPI for the error dynamics defined
below.

To construct such an $\mathcal E$, we analyze how the prediction error
propagates along the horizon and across sampling instants. Following the
three–step construction in Section~\ref{AA}, the prediction error admits the
decomposition
\begin{equation}\label{eq:err_decomp_TAC}
\begin{aligned}
e_{\ell|k}
&=\bigl(z_{\ell|k}-\bar x_{\ell|k}\bigr)
 + \bigl(\bar x_{\ell|k}-\tilde x_{\ell|k}\bigr)\\
&\quad+ \bigl(\tilde x_{\ell|k}-x_{\ell|k}\bigr)
 + \bigl(x_{\ell|k}-\hat x_{\ell|k}\bigr),
\end{aligned}
\end{equation}
and each bracketed term in \eqref{eq:err_decomp_TAC} admits a data–driven
polytopic bound:
\begin{equation}\label{eq:tube_members_TAC}
\begin{aligned}
& z_{\ell|k}-\bar x_{\ell|k} \in \mathcal E_{\mathrm{off}},\qquad
  \bar x_{\ell|k}-\tilde x_{\ell|k} \in \mathcal E_{\mathrm{on}},\\
& \tilde x_{\ell|k}-x_{\ell|k} \in \mathcal E_{w},\qquad
  x_{\ell|k}-\hat x_{\ell|k} \in {\mathcal W}.
\end{aligned}
\end{equation}
Because $\bar x_{\ell|k}-\tilde x_{\ell|k} = (A^\ast{+}B^\ast K)\,e_{\ell-1|k}$,
the term $\mathcal E_{\mathrm{on}}$ propagates the last–step error through the
closed–loop matrix. The remaining terms $z_{\ell|k}-\bar x_{\ell|k}$,
$\tilde x_{\ell|k}-x_{\ell|k}$, and $x_{\ell|k}-\hat x_{\ell|k}$ act as additive
disturbances. It is therefore convenient to aggregate these additive terms into a single
disturbance polytope. Define
\begin{equation}\label{eq:D_TAC_refined}
D
:= \mathcal E_{\mathrm{off}}
\ \oplus\ 
\mathcal E_{w}
\ \oplus\ 
{\mathcal W}
= 2(1+\gamma^\ast)\mathcal W.
\end{equation}
Since each summand in~\eqref{eq:D_TAC_refined} is a compact polytope and
$\mathcal W$ is a compact polytope with $0\in\mathcal W$, the set $D$ is again
a compact polytope containing the origin. Using
$\bar x_{\ell|k}-\tilde x_{\ell|k} = (A^\ast{+}B^\ast K)\, e_{\ell-1|k}$ and the
decomposition \eqref{eq:err_decomp_TAC}, we obtain the pointwise error
recursion
\begin{equation}\label{eq:state_map_TAC_refined}
e_{\ell|k}
= (A^\ast{+}B^\ast K)\, e_{\ell-1|k} \;+\; d_{\ell|k},
\qquad d_{\ell|k} \in D,
\end{equation}
with initialization $e_{0|k} = z_{0|k}-\hat x_k \in \mathcal E$. In terms of
sets this gives
\begin{equation}\label{eq:err_inc_TAC_refined}
e_{\ell|k}
\in
(A^\ast{+}B^\ast K)\, e_{\ell-1|k}
\ \oplus\ 
D,
\qquad
e_{0|k} \in \mathcal E.
\end{equation}

To connect this invariance condition with the data-driven model sets, recall
that we work with the polytopic set $\mathcal A_K$. Replacing the unknown
$A^\ast{+}B^\ast K$ by the robust data–driven set $\mathcal A_K$ yields the
error set-dynamics
\begin{equation}\label{eq:e_set_dyn_refined}
e_{\ell+1|k}
\in
\mathcal A_K\, e_{\ell|k}
\ \oplus\ 
D,
\end{equation}
and the corresponding robust data–driven RPI condition
\begin{equation}\label{eq:RPI_condition_TAC_refined}
\boxed{\quad
\mathcal A_K\, \mathcal E
\ \oplus\ 
D
\ \subseteq\ 
\mathcal E.
\quad}
\end{equation}
This condition ensures, in particular, that after the standard one-step horizon
shift at time $k{+}1$ (where the OCP is re–anchored at $\hat x_{k+1}$), the new
initial error $e_{0|k+1}$ again lies in $\mathcal E$. This is precisely what
is required for recursive feasibility.

The RPI polytopes for both the terminal and error dynamics can be constructed via the data-driven set-iteration
procedure in~\cite{WangAngeli2026CDCsub}, using the respective disturbance sets and seeds.
For the terminal closed-loop dynamics, the limit of the data-driven set iteration coincides
with the mRPI set $\mathcal M_\infty$ and any polytope $\mathcal Z_f$
satisfying~\eqref{eq:terminal_set_final} is RPI and admissible as a terminal set. For the error dynamics, the corresponding
limit (or any certified outer approximation) yields a compact RPI set $\mathcal E$ satisfying~\eqref{eq:RPI_condition_TAC_refined}.
Consequently, the existence of compact (polyhedral) RPI sets $\mathcal Z_f$ and $\mathcal E$ follows directly from
Proposition~\ref{prop:K-stab}.

As in the previous section, we assume that the offline input
$u^d_{[0{:}T{-}1]}$ is persistently exciting of order $L{+}n{+}1$, so that
Assumption~\ref{as:available-trajectory} holds. To ensure the feasibility
of an explicit shifted coefficient $g^+$ that remains in the unit simplex
constraint~\eqref{eq:simplex_regularizer}, we further impose a tail
convex-coverage condition on the offline dataset, as stated in the
following assumption.
\begin{assumption}\label{as:TCC_tail}
Let $L,T\in\mathbb{Z}_{>0}$ with $T\ge L$.
There exists a coefficient $h\in\Delta^{T-L}$ such that
\begin{equation}\label{eq:TCC_tail}
\begin{bmatrix}
H_{u_{[0:L-1]}}\\[.3mm]
\hat H_{x_{[0:L-1]}}
\end{bmatrix} h
=
\begin{bmatrix}
u^d_{[T-L:T-1]}\\[.3mm]
\hat x^d_{[T-L:T-1]}
\end{bmatrix}.
\end{equation}
Equivalently, the tail window
$\big(u^d_{[T-L:T-1]},\hat x^d_{[T-L:T-1]}\big)$
lies in the convex hull of the Hankel columns.
\end{assumption}

\begin{remark}\label{rem:TCC_tail_role}
Assumption~\ref{as:TCC_tail} is an offline-checkable convex-coverage condition
tailored to the constrained, robust data-driven setting. Its verification
reduces to a small LP that can be solved entirely offline prior to controller
deployment; see Appendix~\ref{app:TCC_tail_verify} for explicit LP tests and for
a data-collection procedure (Algorithm~\ref{alg:TCC})
that guarantees Assumption~\ref{as:TCC_tail}.

Conceptually, Assumption~\ref{as:TCC_tail} complements standard persistent
excitation: PE guarantees behavioral spanning in the noiseless setting,
whereas the tail convex-coverage condition quantifies how well the convex
hull of the Hankel columns covers the constrained behavior at the end of the
dataset.
\end{remark}

Assumption~\ref{as:available-trajectory},~\ref{as:TCC_tail_V} and~\ref{as:TCC_tail} ensure that a shifted coefficient
$g^+\in\Delta^{T-L}$ can be constructed explicitly while preserving the
Hankel parametrization. Coupled with the data-driven RPI initial set
$\mathcal E$ and the RPI terminal set $\mathcal Z_f$ developed above,
this allows us to carry out a standard receding–horizon argument and to
propagate feasibility from time $k$ to $k{+}1$. Combining these ingredients, we obtain the following recursive feasibility
result for the proposed TRDDPC scheme.

\begin{theorem}\label{thm:rec_feas_shift}
Suppose Assumption~\ref{as:available-trajectory} --~\ref{as:TCC_tail} hold.
If the measured state at time $k$ satisfies
$\hat x_k\in\hat{\mathcal X}_L^{\mathrm{feas}}$ (equivalently, the
OCP~\eqref{eq:robust_deepc} with parameter $\hat x_k$ is feasible),
then the OCP~\eqref{eq:robust_deepc} is also feasible at time $k{+}1$,
i.e., $\hat x_{k+1}\in\hat{\mathcal X}_L^{\mathrm{feas}}$.
In particular, if $\hat x_0\in\hat{\mathcal X}_L^{\mathrm{feas}}$ then
$\hat x_k\in\hat{\mathcal X}_L^{\mathrm{feas}}$ for all $k\in\mathbb{N}_0$.
\end{theorem}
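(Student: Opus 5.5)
The argument is the standard receding-horizon shift, carried out at the level of the Hankel coefficient. Let $(Z_k^\star,V_k^\star,g^\star)$ be an optimizer at time $k$. We build a candidate $(Z_{k+1}^c,V_{k+1}^c,g^+)$ at time $k+1$ and check each constraint of \eqref{eq:robust_deepc} in turn.

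\emph{Step 1: the shifted coefficient.} Construct $g^+$ by a coefficient shift map. Column $i$ of the Hankel matrices, restricted to block rows $1{:}L$, equals column $i{+}1$ restricted to block rows $0{:}L-1$. The only exception is the last column, whose shifted window extends past the data. That overflow is exactly the tail window $(u^d_{[T-L:T-1]},\hat x^d_{[T-L:T-1]})$, which by Assumption~\ref{as:TCC_tail} equals $\bigl[\begin{smallmatrix}H_{u_{[0:L-1]}}\\ \hat H_{x_{[0:L-1]}}\end{smallmatrix}\bigr]h$ for some $h\in\Delta^{T-L}$. So define
\[
g^+ := S g^\star + g^\star_{T-L}\, h,
\]
where $S$ is the down-shift (it sends $e_i\mapsto e_{i+1}$, i.e.\ $(Sg)_{j+1}=g_j$ for $j<T-L$, and drops $e_{T-L}$). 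Nonnegativity is immediate, and the entries of $g^+$ sum to $\sum_{i<T-L}g^\star_i+g^\star_{T-L}=1$, so $g^+\in\Delta^{T-L}$. By construction, the first $L$ block rows of $\bigl[\begin{smallmatrix}H_u\\ \hat H_x\end{smallmatrix}\bigr]g^+$ reproduce block rows $1{:}L$ of $\bigl[\begin{smallmatrix}H_u\\ \hat H_x\end{smallmatrix}\bigr]g^\star$. Hence
\[
z^c_{\ell|k+1}=z^\star_{\ell+1|k},\qquad v^c_{\ell|k+1}=v^\star_{\ell+1|k},\qquad \ell\in\mathbb N_{0:L-1}.
\]
The last block rows, $z^c_{L|k+1}$ and $v^c_{L|k+1}$, are whatever $g^+$ produces. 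Constraints \eqref{eq:robust_deepc_dyn}--\eqref{eq:robust_deepc_simplex} then hold by definition.

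\emph{Step 2: state, input and terminal constraints.}
\begin{itemize}
\item For $\ell\in\mathbb N_{1:L-1}$, the state constraint $z^c_{\ell|k+1}=z^\star_{\ell+1|k}\in\mathcal Z$ holds by feasibility at time $k$.
\item For $\ell\in\mathbb N_{0:L-2}$, the input constraint $v^c_{\ell|k+1}+Kz^c_{\ell|k+1}\in\hat{\mathcal U}$ holds for the same reason.
\item At $\ell=L-1$, the candidate state is $z^\star_{L|k}\in\mathcal Z_f\subseteq\mathcal Z$. The candidate input deviation is $v^\star_{L|k}=(H_{u_{[L]}}-K\hat H_{x_{[L]}})g^\star\in\mathcal V$ by Assumption~\ref{as:TCC_tail_V}. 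Then \eqref{eq:terminal_input_tight} gives $Kz^\star_{L|k}+v^\star_{L|k}\in\hat{\mathcal U}$.
\item For the new terminal state, Proposition~\ref{prop:window_model_realization_TAC_poly} with $\ell=L-1$ for $g^+$, taking $(A,B)=(A^\ast,B^\ast)\in\mathcal I_{\mathrm{full}}^{\mathrm{poly},\ast}$, gives
\[
z^c_{L|k+1}=(A^\ast+B^\ast K)z^\star_{L|k}+B^\ast v^\star_{L|k}+\Delta_{L-1}(g^+)\in\mathcal A_K\mathcal Z_f\oplus\Omega\subseteq\mathcal Z_f
\]
by \eqref{eq:terminal_RPI_condition}. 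This also yields $z^c_{L|k+1}\in\mathcal Z$.
\end{itemize}

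\emph{Step 3: the anchoring constraint (main obstacle).} We must show $z^c_{0|k+1}-\hat x_{k+1}=z^\star_{1|k}-\hat x_{k+1}\in\mathcal E$. Use the error decomposition \eqref{eq:err_decomp_TAC} at $\ell=1$, with $x_{1|k}=x_{k+1}$ and $\hat x_{1|k}=\hat x_{k+1}$:
\[
e_{1|k}=(A^\ast+B^\ast K)e_{0|k}+d_{1|k}.
\]
The disturbance $d_{1|k}$ collects three terms:
\begin{itemize}
\item $z^\star_{1|k}-\bar x_{1|k}=\Delta_0(g^\star)\in\mathcal E_{\mathrm{off}}$, by \eqref{eq:z_minus_barx_recursive_short};
\item $\tilde x_{1|k}-x_{1|k}=A^\ast w_k\in\mathcal E_w$;
\item $x_{k+1}-\hat x_{k+1}=-w_{k+1}$.
\end{itemize}
The last term lies in $\mathcal W$ only if $\mathcal W$ is symmetric. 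I would check this point and, if needed, replace $\mathcal W$ by $-\mathcal W$ in $D$; with $D=2(1+\gamma^\ast)\mathcal W$ as written, symmetry seems to be implicitly assumed. Granting it, $d_{1|k}\in D$. Since $e_{0|k}\in\mathcal E$ and $A^\ast+B^\ast K\in\mathcal A_K$, the RPI condition \eqref{eq:RPI_condition_TAC_refined} gives $e_{1|k}\in\mathcal E$.

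The main care is that the implemented input $u_k=v^\star_{0|k}+K\hat x_k$ matches the plan-conditioned dynamics \eqref{eq:x_dynamics_main}, so that the decomposition applies to the true closed loop. Once Step 3 holds, all constraints of \eqref{eq:robust_deepc} are met at time $k+1$, so $\hat x_{k+1}\in\hat{\mathcal X}_L^{\mathrm{feas}}$. Induction on $k$ then gives the final statement.
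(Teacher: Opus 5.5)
Your proposal is correct and follows the paper's proof essentially step for step. It uses the same shifted coefficient $g^+=Jg^\star+g^\star_{T-L}h$ built from the tail convex-coverage certificate, and the same appended deviation $v^\star_{L|k}\in\mathcal V$ combined with $K\mathcal Z_f\oplus\mathcal V\subseteq\hat{\mathcal U}$. It also uses the terminal RPI condition for $z^c_{L|k+1}$, which you derive more explicitly via the true-model realization than the paper does, and the error RPI condition $\mathcal A_K\mathcal E\oplus D\subseteq\mathcal E$ for re-anchoring at $\hat x_{k+1}$.

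Your observation that $x_{k+1}-\hat x_{k+1}=-w_{k+1}\in-\mathcal W$ is a legitimate refinement. The paper's claim $x_{\ell|k}-\hat x_{\ell|k}\in\mathcal W$ tacitly requires $\mathcal W$ to be symmetric (as in its box examples); otherwise $D$ should be built with $-\mathcal W$.
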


\begin{proof}
\textbf{Construction of the shifted candidate.}
Since $\hat x_k\in\hat{\mathcal X}_L^{\mathrm{feas}}$ at time $k$, the OCP admits
an optimal solution $(Z_k^\star,V_k^\star,g^\star)$, where
$Z_k^\star := z_{[0|k,\,L|k]}^\star$ and
$V_k^\star := v_{[0|k,\,L-1|k]}^\star$, with $g^\star\in\Delta^{T-L}$.
Define the associated planned input sequence
$U_k^\star := u_{[0|k,\,L-1|k]}^\star$ by $u_{\ell|k}^\star = K z_{\ell|k}^\star + v_{\ell|k}^\star$. By feasibility at time $k$, all tightened constraints are satisfied and
$z_{L|k}^\star\in\mathcal Z_f$.

We now construct a feasible candidate at time $k{+}1$ via the standard one-step shift.
Define the shifted state sequence
\begin{equation}\label{eq:shift_state_TAC}
z_{\ell|k+1}^c := z_{\ell+1|k}^\star,\qquad \forall\,\ell \in \mathbb{N}_{0:L-1}.
\end{equation}
To define the shifted inputs, we shift the previously planned inputs and append an affine terminal action:
\begin{equation}\label{eq:shift_input_TAC_affine}
U_{k+1}^c
:=
\operatorname{col}\bigl(
u_{1|k}^\star,\ldots,u_{L-1|k}^\star,\;
K z_{L|k}^\star + v_{L-1|k+1}^c
\bigr).
\end{equation}
where $v_{L-1|k+1}^c$ will be specified below.

Next, we construct a simplex-admissible shifted coefficient. Let
$J\in\mathbb{R}^{(T-L)\times(T-L)}$ be the forward shift matrix defined by
$Je_i=e_{i+1}$ for $i=1,\ldots,T{-}L{-}1$ and $Je_{T-L}=0$.
Under Assumption~\ref{as:TCC_tail}, pick $h\in\Delta^{T-L}$ satisfying \eqref{eq:TCC_tail} and define
\begin{equation}\label{eq:gplus_def_tail}
g^+ := Jg^\star + g^{\star{T-L}}h.
\end{equation}
Then $g^+\ge 0$ and $\mathbf 1^\top g^+ = (\mathbf 1^\top g^\star-g^{\star{T-L}})+g^{\star{T-L}}=1$, hence $g^+\in\Delta^{T-L}$.
Moreover, a direct calculation using \eqref{eq:TCC_tail} yields, for all $\ell\in\mathbb N_{0:L-1}$,
\begin{equation}\label{eq:Hankel_shift_hold_tail}
H_{u_{[\ell]}}g^+  = H_{u_{[\ell+1]}}g^\star,\qquad
\hat H_{x_{[\ell]}}g^+  = \hat H_{x_{[\ell+1]}}g^\star.
\end{equation}

We now specify the appended terminal deviation with $g^+$,
\begin{equation}\label{eq:v_append_candidate}
v_{L-1|k+1}^c
:=\bigl(H_{u_{[L-1]}}-K\hat H_{x_{[L-1]}}\bigr)g^+.
\end{equation}
Then, by \eqref{eq:Hankel_shift_hold_tail} with $\ell=L{-}1$,
\[
v_{L-1|k+1}^c
=\bigl(H_{u_{[L]}}-K\hat H_{x_{[L]}}\bigr)g^\star \in \mathcal V,
\]
where the inclusion follows from Assumption~\ref{as:TCC_tail_V} and $g^\star\in\Delta^{T-L}$.

Finally, define the terminal candidate state as $z_{L|k+1}^c := \hat H_{x_{[L]}}g^+$. Furthermore,~\eqref{eq:Hankel_shift_hold_tail} implies that the shifted window is realized by $g^+$:
\[
\begin{aligned}
H_{u_{[0:L-2]}}g^+ &= \operatorname{col}(u_{1|k}^\star,\ldots,u_{L-1|k}^\star),\\
H_{u_{[L-1]}}g^+ &= K z_{L|k}^\star + v_{L-1|k+1}^c = u_{L-1|k+1}^c,
\end{aligned}
\]
and similarly $\hat H_{x_{[0:L-1]}}g^+ = z_{[0|k+1,\ L-1|k+1]}^c$.

\smallskip
\textbf{(i) Stage constraints.}
For $\ell=0,\ldots,L{-}2$, the shift yields
$z_{\ell|k+1}^c = z_{\ell+1|k}^\star \in \mathcal Z$ and
$u_{\ell|k+1}^c = u_{\ell+1|k}^\star \in \hat{\mathcal U}$.
For $\ell = L{-}1$, we have
$z_{L-1|k+1}^c = z_{L|k}^\star \in \mathcal Z_f \subseteq \mathcal Z$ and
\[
u_{L-1|k+1}^c = K z_{L|k}^\star + v_{L-1|k+1}^c,
\qquad v_{L-1|k+1}^c\in\mathcal V.
\]
By \eqref{eq:terminal_set_final}, $K\mathcal Z_f \oplus \mathcal V \subseteq \hat{\mathcal U}$;
hence $u_{L-1|k+1}^c\in\hat{\mathcal U}$.
Moreover, $z_{L|k+1}^c\in\mathcal Z$ follows from $\mathcal Z_f\subseteq\mathcal Z$. Thus all tightened stage constraints are satisfied at time $k{+}1$.

\smallskip
\textbf{(ii) Initial constraint.}
At time $k$, feasibility enforces $e_{0|k} = z_{0|k}^\star - \hat x_k \in \mathcal E$.
By \eqref{eq:shift_state_TAC}, $z_{0|k+1}^c = z_{1|k}^\star$. Since only $u_{0|k}^\star$ is implemented,
we have $\hat x_{k+1}=\hat x_{1|k}$, and hence
\[
e_{0|k+1} := z_{0|k+1}^c - \hat x_{k+1}
= z_{1|k}^\star - \hat x_{1|k}
=: e_{1|k}.
\]
By the error set dynamics \eqref{eq:e_set_dyn_refined} and the RPI condition
\eqref{eq:RPI_condition_TAC_refined}, $e_{1|k}\in\mathcal E$ whenever $e_{0|k}\in\mathcal E$.
Therefore $z_{0|k+1}^c - \hat x_{k+1} \in \mathcal E$, and the initial constraint is satisfied at time $k{+}1$.

\smallskip
\textbf{(iii) Terminal constraint.}
The Hankel constraints and the data-consistency
model set imply that the terminal successor is consistent with the data-driven closed-loop family,
namely
\[
z_{L|k+1}^c
\in
\mathcal A_K z_{L|k}^\star \ \oplus\ \Omega.
\]
Since $z_{L|k}^\star\in\mathcal Z_f$ and $\mathcal Z_f$ satisfies the terminal RPI condition~\eqref{eq:terminal_RPI_condition}, it follows that
$z_{L|k+1}^c\in\mathcal Z_f$. Thus the terminal constraint is satisfied at time $k{+}1$.

\smallskip
\textbf{(iv) Behavioral realizability.}
Finally, we verify that the shifted sequences admit a Hankel
realization at time $k{+}1$.
By construction of $g^+$
\begin{equation}\label{eq:behav_real_proof}
\begin{aligned}
&\begin{bmatrix}
v_{[0|k+1,\ L|k+1]}^c\\[.3mm]
z_{[0|k+1,\ L|k+1]}^c
\end{bmatrix}
=
\begin{bmatrix}
H_u-\tilde K\,\hat H_x\\[.3mm]
\hat H_x
\end{bmatrix} g^+, \\
&g^+\in\Delta^{T-L},
\end{aligned}
\end{equation}
so the Hankel equalities hold at time $k{+}1$.

Items (i)--(iv) show that $(Z_{k+1}^c, V_{k+1}^c, g^+)$
satisfies all constraints of the OCP at time $k{+}1$.
Hence the OCP is feasible at $k{+}1$.
\end{proof}

\begin{remark}\label{rem:TCC_tail_scaled}
Assumption~\ref{as:TCC_tail} is a sufficient condition that guarantees an
explicit construction of the shifted coefficient vector $g^+$ in
\eqref{eq:gplus_def_tail}, enabling a shift-based recursive feasibility proof
under the Hankel parametrization.

If extending the dataset or collecting structured data (e.g.,
Algorithm~\ref{alg:TCC}) is not practical, one may relax the simplex constraint
by using the scaled simplex
\[
\Delta_\theta^{T-L}
:=\{g\in\mathbb{R}^{T-L}\mid g\ge 0,\ \mathbf 1^\top g=\theta\},\qquad \theta>0.
\]
For $g\in\Delta_\theta^{T-L}$, $\widetilde H_{w_{[\ell]}}g\in\theta\mathcal W$
and $\Delta_\ell(g)\in(1+\gamma)\theta\mathcal W$. At the analysis level, this modification only affects the offline mismatch bound:
it amounts to replacing the disturbance polytope $D=2(1+\gamma^\ast)\mathcal W$ by
\[
D_\theta := (1+\theta)(1+\gamma^\ast)\mathcal W.
\]
Hence, $\theta$ merely rescales the data-induced offline uncertainty.

The parameter $\theta$ admits a direct interpretation: $\theta=1$ recovers the
baseline formulation; $\theta>1$ relaxes the Hankel-shift condition (thereby enlarging the feasible
set) at the cost of a larger offline mismatch bound and, consequently, more conservative
tightenings. Conversely, $\theta<1$ can be admissible for sufficiently rich datasets and
yields a smaller mismatch bound, reducing conservatism. In
practice, the smallest admissible $\theta$ can be obtained by LP feasibility
checks (e.g., via a simple line search), and serves as a data-dependent measure
of the effective offline uncertainty. systematic tuning rules are left for
future work.
\end{remark}

\subsection{Closed-loop Constraint Satisfaction}\label{Constraint Satisfaction}
The following result shows that the implemented input and the actual closed-loop state respect the original constraints at all time.
\begin{theorem}\label{thm:cl_consat_TAC}
If $x_0\in\mathcal X_L^{\mathrm{feas}}$, then, under the control law $
u_k \;=\; v_{0|k}^\star \;+\; K\,\hat x_k
$,
the closed-loop input and state satisfy, for all $k\in\mathbb{N}_0$,
\[
u_k \in \mathcal U, \qquad x_{k+1} \in \mathcal X.
\]
\end{theorem}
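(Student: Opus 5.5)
The plan is to combine recursive feasibility (Theorem~\ref{thm:rec_feas_shift}) with the tightenings \eqref{eq:Zl_def_TAC_final} and \eqref{eq:Ul_def_TAC_final}, applied only at prediction step $\ell=0$ for the input and $\ell=1$ for the state. This suffices because the implemented input is the first planned input and, since only that input is applied, $x_{k+1}=x_{1|k}$.

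\textbf{Step 1: feasibility at every time.} From $x_0\in\mathcal X_L^{\mathrm{feas}}=\hat{\mathcal X}_L^{\mathrm{feas}}\ominus\mathcal W$ and $w_0\in\mathcal W$, we get $\hat x_0\in\hat{\mathcal X}_L^{\mathrm{feas}}$. Theorem~\ref{thm:rec_feas_shift} then gives $\hat x_k\in\hat{\mathcal X}_L^{\mathrm{feas}}$ for all $k\in\mathbb N_0$. Hence an optimizer $(Z_k^\star,V_k^\star,g^\star)$ exists at every $k$ and satisfies all tightened constraints \eqref{eq:robust_deepc_init}--\eqref{eq:robust_deepc_terminal}.

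\textbf{Step 2: input constraint.} Write
\[
u_k=v_{0|k}^\star+K\hat x_k=\bigl(v_{0|k}^\star+Kz_{0|k}^\star\bigr)-K\bigl(z_{0|k}^\star-\hat x_k\bigr).
\]
The first term lies in $\hat{\mathcal U}=\mathcal U\ominus K\mathcal E$ by \eqref{eq:robust_deepc_input} with $\ell=0$. The anchoring constraint gives $e_{0|k}=z_{0|k}^\star-\hat x_k\in\mathcal E$. Therefore $u_k\in\hat{\mathcal U}\oplus K\mathcal E\subseteq\mathcal U$, using $(\mathcal A\ominus\mathcal B)\oplus\mathcal B\subseteq\mathcal A$. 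This is exactly \eqref{eq:Ul_def_TAC_final} at $\ell=0$, where $\hat x_{0|k}=\hat x_k$.

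\textbf{Step 3: state constraint.} Since $u_k$ is generated by \eqref{eq:x_dynamics_main} at $\ell=0$, we have $x_{k+1}=x_{1|k}$. Applying the three-term decomposition \eqref{eq:tube_chain_intro} at $\ell=1$ gives three inclusions:
\begin{itemize}
\item $z_{1|k}^\star-\bar x_{1|k}=\Delta_0(g^\star)\in\mathcal E_{\mathrm{off}}$, by \eqref{eq:z_minus_barx_recursive_short}--\eqref{eq:step3_pointwise_bound_short}, using the true-model realization with the actual noise Hankel $H_w$;
\item $\bar x_{1|k}-\tilde x_{1|k}=(A^\ast+B^\ast K)e_{0|k}\in(A^\ast+B^\ast K)\mathcal E\subseteq\mathcal E_{\mathrm{on}}$, because $(A^\ast,B^\ast)\in\mathcal I_{\mathrm{full}}^{\mathrm{poly},\ast}$ implies $A^\ast+B^\ast K\in\mathcal A_K$;
\item $\tilde x_{1|k}-x_{1|k}=A^\ast w_k\in\gamma^\ast\mathcal W=\mathcal E_w$, by \eqref{eq:tilde_minus_x_poly_bound_short}.
\end{itemize}
Hence $z_{1|k}^\star-x_{k+1}\in\mathcal E_{\mathrm{off}}\oplus\mathcal E_{\mathrm{on}}\oplus\mathcal E_w$. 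Since $z_{1|k}^\star\in\mathcal Z=\mathcal X\ominus\mathcal E_{\mathrm{off}}\ominus\mathcal E_{\mathrm{on}}\ominus\mathcal E_w$, we write $x_{k+1}=z_{1|k}^\star-(z_{1|k}^\star-x_{k+1})$. Note that $\mathcal E_{\mathrm{off}},\mathcal E_w$ are symmetric only if $\mathcal W$ is, and $\mathcal E_{\mathrm{on}}$ only if $\mathcal E$ is, so the sign must be tracked rather than assumed.

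\textbf{Main obstacle.} The care needed is in this sign convention. The tightening is stated as $\mathcal X\ominus\mathcal E_{\mathrm{off}}\ominus\dots$ while the error enters as $x=z-(z-x)$. I would first absorb the sign by reading the decomposition as $x_{\ell|k}-z_{\ell|k}$. Failing that, I would note that the paper's implication \eqref{eq:Zl_def_TAC_final} is already stated as established, and invoke it directly at $\ell=1$. That step yields $x_{1|k}\in\mathcal X$, i.e.\ $x_{k+1}\in\mathcal X$. Combining with Step~2 for all $k\in\mathbb N_0$ completes the argument.
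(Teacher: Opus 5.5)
Your proposal is correct and follows the paper's proof. Both use recursive feasibility from $\hat x_0=x_0+w_0\in\hat{\mathcal X}_L^{\mathrm{feas}}$ via Theorem~\ref{thm:rec_feas_shift}, then the input tightening \eqref{eq:Ul_def_TAC_final} at $\ell=0$ and the state tightening \eqref{eq:Zl_def_TAC_final} at $\ell=1$ together with $x_{k+1}=x_{1|k}$; the paper just invokes those two implications, while you re-derive them. One small point: in Step~2 you have $-Ke_{0|k}\in -K\mathcal E$, not $K\mathcal E$, so the sign caveat you raise for the state applies equally to the input. Both are harmless for symmetric $\mathcal W$ and $\mathcal E$, and are in any case covered by falling back on the paper's stated implications, exactly as the paper's proof does.
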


\begin{proof}
Assume $x_0\in\mathcal X_L^{\mathrm{feas}}$. By definition
\eqref{eq:X_L_feas_def}, this implies $\hat x_0=x_0+w_0\in\hat{\mathcal X}_L^{\mathrm{feas}}$
for all $w_0\in\mathcal W$, and in particular for the realized $w_0$.
Hence the OCP~\eqref{eq:robust_deepc} is feasible at time $k=0$.
By Theorem~\ref{thm:rec_feas_shift}, it remains feasible for all $k\in\mathbb N_0$.

Fix any $k\in\mathbb N_0$ and let $(Z_k^\star,V_k^\star,g^\star)$ be an optimal
solution at time $k$. By the input-side tightening in \eqref{eq:Ul_def_TAC_final} with $\ell=0$ and
$\hat x_{0|k}=\hat x_k$, feasibility implies
\[
v_{0|k}^\star + K z_{0|k}^\star \in \hat{\mathcal U}
\ \Longrightarrow\ 
v_{0|k}^\star + K \hat x_k \in \mathcal U,
\]
so that $u_k=v_{0|k}^\star+K\hat x_k\in\mathcal U$. By the state-side tightening in \eqref{eq:Zl_def_TAC_final}, feasibility implies
$z_{1|k}^\star\in\mathcal Z$, and therefore
\[
x_{k+1}=x_{1|k}\in\mathcal X.
\]
Since this holds for all $k\in\mathbb N_0$, the closed-loop satisfies
$u_k\in\mathcal U$ and $x_{k+1}\in\mathcal X$ for all $k\in\mathbb N_0$.
\end{proof}

\subsection{Practical Input-to-State Stability}\label{ISS}
In this section, we establish practical ISS of the true closed-loop system on
$\mathcal X_L^{\mathrm{feas}}$. We use the optimal predicted cost as the
Lyapunov candidate. First, we prove an ISS-type terminal decrease for the
shifted feasible trajectory, which yields a one-step shift inequality for
$V(\hat x_k)$. Second, using the anchoring constraint
$z_{0|k}^\star-\hat x_k\in\mathcal E$, we relate $z_{0|k}^\star$ to $\hat x_k$
and obtain a practical ISS-Lyapunov difference inequality for the measured
state. Finally, since $\hat x_k=x_k+w_k$ with $w_k\in\mathcal W$, we pass from
$\hat x_k$ to $x_k$ and thereby obtain practical ISS of the true closed-loop
system. All constants are computable and summarized in
Appendix~\ref{app:constants}.

We select the terminal weight $P_L$ consistently with the stabilizing gain $K$
computed from the data-driven SDP~\eqref{eq:poly_decay_SDP}. Specifically, we
convert the dual Lyapunov inequality~\eqref{eq:uniform_quadratic_decrease} into
a primal Lyapunov inequality, as stated below.
\begin{lemma}\label{lem:dual_to_primal}
Suppose the SDP~\eqref{eq:poly_decay_SDP} is feasible. Then there exist
$P\succ0$, a feedback gain $K$, and $\beta>0$ such that
\begin{equation}\label{eq:dual_lyap_final}
A_{\mathrm{cl}}^\ast P (A_{\mathrm{cl}}^\ast)^\top - P + \beta I \;\preceq\; 0,
\qquad A_{\mathrm{cl}}^\ast := A^\ast + B^\ast K.
\end{equation}
Define $S := Q + K^\top R K$ and
\[
\begin{aligned}
\lambda &:= \frac{\beta}{\lambda_{\max}(P)},\qquad
P_L := c_pP^{-1}, \\
c_p &>\frac{1}{\lambda}\,\lambda_{\max}\!\bigl(P^{1/2} S P^{1/2}\bigr).
\end{aligned}
\]
Then,
\begin{equation}\label{eq:primal_lyap_final}
(A^\ast{+}B^\ast K)^\top P_L (A^\ast{+}B^\ast K)
- P_L + S \;\prec\; 0
\end{equation}
\end{lemma}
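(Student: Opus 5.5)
The plan is to reduce the claim to a single congruence argument. First, the dual inequality is converted into a contraction statement for a similarity-transformed closed-loop matrix. Then that contraction is transposed, which is lossless because $MM^\top$ and $M^\top M$ share their nonzero eigenvalues, and the result is read back in the primal variable $P^{-1}$. The constant $c_p$ is chosen exactly so that the contraction margin dominates the stage weight $S$.

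\textbf{Step 1 (existence).} Feasibility of the SDP~\eqref{eq:poly_decay_SDP} gives $P\succ0$, $K=YP^{-1}$ and $\beta>0$. Since $(A^\ast,B^\ast)\in\mathcal I_{\mathrm{full}}^{\mathrm{poly},\ast}$, Proposition~\ref{prop:K-stab} yields \eqref{eq:dual_lyap_final} directly.

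\textbf{Step 2 (dual contraction).} Since $P\preceq\lambda_{\max}(P)I$, we have $\beta I\succeq \lambda P$ with $\lambda=\beta/\lambda_{\max}(P)>0$. Hence \eqref{eq:dual_lyap_final} gives $A_{\mathrm{cl}}^\ast P(A_{\mathrm{cl}}^\ast)^\top\preceq(1-\lambda)P$. Because the left-hand side is positive semidefinite, this also forces $0<\lambda\le1$.

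Define $M:=P^{-1/2}A_{\mathrm{cl}}^\ast P^{1/2}$. A congruence with $P^{-1/2}$ turns the inequality into $MM^\top\preceq(1-\lambda)I$. This is equivalent to $\|M\|^2\le 1-\lambda$, and therefore $M^\top M\preceq(1-\lambda)I$.

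\textbf{Step 3 (back to primal form).} Expanding gives
\[
M^\top M = P^{1/2}(A_{\mathrm{cl}}^\ast)^\top P^{-1}A_{\mathrm{cl}}^\ast P^{1/2}.
\]
A congruence with $P^{-1/2}$ then yields $(A_{\mathrm{cl}}^\ast)^\top P^{-1}A_{\mathrm{cl}}^\ast-P^{-1}\preceq-\lambda P^{-1}$. Multiplying by $c_p>0$ and using $P_L=c_pP^{-1}$, we obtain
\[
(A_{\mathrm{cl}}^\ast)^\top P_L A_{\mathrm{cl}}^\ast-P_L+S\preceq S-\lambda c_pP^{-1}.
\]

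\textbf{Step 4 (choice of $c_p$).} It remains to show $S-\lambda c_p P^{-1}\prec0$. A congruence with $P^{1/2}$ makes this equivalent to $P^{1/2}SP^{1/2}\prec\lambda c_p I$. This holds precisely because $c_p>\lambda^{-1}\lambda_{\max}(P^{1/2}SP^{1/2})$. Note that $S=Q+K^\top RK\succ0$, so the bound on $c_p$ is positive and well defined.

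The only step that needs care is Step 2, namely transferring the bound from $MM^\top$ to $M^\top M$ via the spectral norm. This step is what lets the dual Lyapunov inequality certify the primal one with the explicit rate $\lambda$. Everything else consists of congruence transformations and the defining inequality for $c_p$.
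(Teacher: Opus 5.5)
Your proof is correct and follows the same route as the paper, which derives $(A_{\mathrm{cl}}^\ast)^\top P^{-1}A_{\mathrm{cl}}^\ast - P^{-1}\preceq-\lambda P^{-1}$ by a ``standard duality argument'', multiplies by $c_p$, and then uses $S\prec\lambda P_L$, exactly as in your Steps 3--4. Your similarity transform $M=P^{-1/2}A_{\mathrm{cl}}^\ast P^{1/2}$, together with the fact that $MM^\top$ and $M^\top M$ have the same spectral norm, spells out the duality step that the paper leaves implicit.
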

\begin{proof}
From \eqref{eq:dual_lyap_final}, a standard duality argument yields
\begin{equation}\label{eq:dual_to_primal_mid}
(A^\ast{+}B^\ast K)^\top P^{-1} (A^\ast{+}B^\ast K)
- P^{-1} \preceq -\lambda P^{-1}.
\end{equation}
Multiplying by $c_p>0$ and using $P_L=c_pP^{-1}$ gives
\[
(A^\ast{+}B^\ast K)^\top P_L (A^\ast{+}B^\ast K) - P_L
\preceq -\lambda P_L.
\]
By the choice of $c_p$,
\(
S \prec \lambda P_L
\),
and hence
\[
(A^\ast{+}B^\ast K)^\top P_L (A^\ast{+}B^\ast K) - P_L + S
\preceq -\lambda P_L + S \prec 0,
\]
which proves \eqref{eq:primal_lyap_final}.
\end{proof}

We next use Lemma~\ref{lem:dual_to_primal} to derive a practical ISS-type
terminal decrease for the aggregated disturbance set $\Omega$ defined
in~\eqref{eq:terminal_dyn_affine}. Using the terminal affine dynamics
\begin{equation}\label{eq:terminal_shift_final}
z^c_{L|k+1}
=
A_{\mathrm{cl}}^\ast z^c_{L-1|k+1}
+
\delta_{L|k},
\qquad
\delta_{L|k}\in\Omega,
\end{equation}
and the identity
\(
z^c_{L-1|k+1} = z^\star_{L|k}
\)
along feasible closed-loop trajectories, we can express the terminal cost
decrease as
\[
\begin{aligned}
&z^c_{L|k+1}{}^\top P_L z^c_{L|k+1}
-
z^c_{L-1|k+1}{}^\top P_L z^c_{L-1|k+1}\\
&=
z^c_{L-1|k+1}{}^\top
A_{\mathrm{cl}}^{\ast\top} P_L A_{\mathrm{cl}}^\ast
z^c_{L-1|k+1}
-
z^c_{L-1|k+1}{}^\top P_L z^c_{L-1|k+1}\\
&\quad
+\,2\,z^c_{L-1|k+1}{}^\top
A_{\mathrm{cl}}^{\ast\top} P_L \delta_{L|k}
+
\delta_{L|k}^\top P_L \delta_{L|k}.
\end{aligned}
\]

Using~\eqref{eq:primal_lyap_final} with $z=z^c_{L-1|k+1}$ yields
\[
\begin{aligned}
&z^c_{L-1|k+1}{}^\top
\bigl(
A_{\mathrm{cl}}^{\ast\top} P_L A_{\mathrm{cl}}^\ast
-
P_L
\bigr)
z^c_{L-1|k+1}\\
&<
-
z^c_{L-1|k+1}{}^\top S z^c_{L-1|k+1}.
\end{aligned}
\]

Moreover, by Young's inequality in the $P_L$-weighted inner product, for any
$\alpha>0$ we have
\[
\begin{aligned}
&2z^c_{L-1|k+1}{}^\top
A_{\mathrm{cl}}^{\ast\top} P_L \delta_{L|k}\\
&\le
\alpha\,
z^c_{L-1|k+1}{}^\top
A_{\mathrm{cl}}^{\ast\top} P_L A_{\mathrm{cl}}^\ast
z^c_{L-1|k+1}
+
\frac{1}{\alpha}\,
\delta_{L|k}^\top P_L \delta_{L|k}.
\end{aligned}
\]

Since $S\prec \lambda P_L$ by the choice of $c_p$, define the computable margin
\[
\eta
:=
\lambda
-
\lambda_{\max}\!\bigl(P_L^{-1/2} S P_L^{-1/2}\bigr)
>0,
\]
so that $S \preceq (\lambda-\eta)P_L$ and hence
$-\lambda P_L \preceq -S - \eta P_L$.
Therefore,
\[
\begin{aligned}
&z^c_{L-1|k+1}{}^\top
\bigl(
A_{\mathrm{cl}}^{\ast\top} P_L A_{\mathrm{cl}}^\ast
-
P_L
\bigr)
z^c_{L-1|k+1}\\
&\le -
z^c_{L-1|k+1}{}^\top S z^c_{L-1|k+1}
-
\eta\,z^c_{L-1|k+1}{}^\top P_L z^c_{L-1|k+1}.
\end{aligned}
\]

Combining the above bounds and using $\delta_{L|k}\in\Omega$, we can write
$\delta_{L|k}=\Delta_L(g)+\nu_{L|k}$ for some $\Delta_L(g)\in(1+\gamma^\ast)\mathcal W$
and $\nu_{L|k}\in\mathcal B\mathcal V$. Hence,
\[
\delta_{L|k}^\top P_L \delta_{L|k}
\le
2\lambda_{\max}(P_L)\|\Delta_L(g)\|_2^2
+
2\lambda_{\max}(P_L)\|\nu_{L|k}\|_2^2.
\]
Let $\epsilon:=\max_{w\in\mathcal W}\|w\|_2$. Then $\|\Delta_L(g)\|_2\le (1+\gamma^\ast)\epsilon$, and therefore
\[
\delta_{L|k}^\top P_L \delta_{L|k}
\le
2\lambda_{\max}(P_L)(1+\gamma^\ast)^2\epsilon^2
+
2\lambda_{\max}(P_L)\max_{d\in\mathcal B\mathcal V}\|d\|_2^2.
\]
Define $c_\Omega(\epsilon):=c_\delta\epsilon^2+c_V$,
$c_\delta:=(1+\alpha^{-1})\,2\lambda_{\max}(P_L)(1+\gamma^\ast)^2$, and
$c_V:=(1+\alpha^{-1})\,2\lambda_{\max}(P_L)\,\bar d_V^2$ with
$\bar d_V:=\max_{d\in\mathcal B\mathcal V}\|d\|_2$.
Choosing $\alpha:= \frac{\eta}{1-\lambda}$ so that $\alpha(1-\lambda)=\eta$,
the $P_L$-term cancels, and we obtain the practical ISS-type terminal
decrease
\begin{equation}\label{eq:ISStype_decrease}
\begin{aligned}
&z^c_{L|k+1}{}^\top P_L z^c_{L|k+1}
-
z^c_{L-1|k+1}{}^\top P_L z^c_{L-1|k+1}\\
&\le
-\,z^c_{L-1|k+1}{}^\top S\,z^c_{L-1|k+1}
+\;c_\Omega(\epsilon),
\end{aligned}
\end{equation}
which holds for all $k$.

We now study the input-to-state stability of the true closed-loop system on
the feasible region $\mathcal X_L^{\mathrm{feas}}$.
Fix an initial condition $x_0\in\mathcal X_L^{\mathrm{feas}}$.
By Theorem~\ref{thm:rec_feas_shift}, the OCP~\eqref{eq:robust_deepc} is
feasible for all $k\in\mathbb{N}_0$, and the corresponding closed-loop trajectory
is well-defined and satisfies
\[
x_k \in \mathcal X_L^{\mathrm{feas}},
\qquad
\hat x_k \in \hat{\mathcal X}_L^{\mathrm{feas}},
\qquad
\forall\,k\in\mathbb{N}_0.
\]
All subsequent statements in this subsection are understood along such
closed-loop trajectories.

Using the quadratic stage and terminal costs in~\eqref{eq:QR_costs}, define the
optimal finite-horizon value function at time $k$ as
\begin{equation}\label{eq:Vz_def_final}
\begin{aligned}
V(\hat x_k)
&:=
V_z(z_{0|k}^\star)\\
&:=
\sum_{\ell=0}^{L-1}
J_{\mathrm{s}}\!\big(z_{\ell|k}^\star,\, v_{\ell|k}^\star+K z_{\ell|k}^\star\big)
\;+\;
J_{\mathrm{f}}\!\big(z_{L|k}^\star\big).
\end{aligned}
\end{equation}

Next, we derive the shift inequality for $V(\hat x_k)$. Let
$(Z_k^\star,V_k^\star,g^\star)$ be optimal at time $k$, and construct the
shifted feasible candidate $(Z_{k+1}^c,U_{k+1}^c,g^+)$. Define the
corresponding candidate cost at time $k{+}1$ as
\[
J_{k+1}^c
:=
\sum_{\ell=0}^{L-1}
J_{\mathrm s}\!\bigl(z_{\ell|k+1}^c,\;u_{\ell|k+1}^c\bigr)
\;+\;
J_{\mathrm f}\!\bigl(z_{L|k+1}^c\bigr).
\]
By feasibility of the candidate and optimality at time $k{+}1$,
\[
V(\hat x_{k+1})
\;=\;
V_z(z_{0|k+1}^\star)
\;\le\;
J_{k+1}^c.
\]
Under the one-step shift, the $L{-}1$ interior stage costs cancel. Using the
practical ISS-type terminal decrease~\eqref{eq:ISStype_decrease} yields
\begin{equation}\label{eq:Vz_shift_ineq_final}
\begin{aligned}
V(\hat x_{k+1})-V(\hat x_k)
&\le
-\,J_{\mathrm s}\!\bigl(z_{0|k}^\star,\,v_{0|k}^\star+K z_{0|k}^\star\bigr)
\;+\;
c_\Omega(\epsilon)
\\
&\le
-\,\underline q\,\|z_{0|k}^\star\|^2
\;+\;
c_\Omega(\epsilon),
\end{aligned}
\end{equation}
where the second inequality follows from the coercivity of the stage cost
in~\eqref{eq:QR_costs}.

We next relate $z_{0|k}^\star$ to the measured state $\hat x_k$. By the
anchoring constraint
$
e_{0|k}:=z_{0|k}^\star-\hat x_k\in\mathcal E,
$
there exists a computable constant $\beta_h>0$ such that
\begin{equation}\label{eq:e0_bound_hat_final}
\|e_{0|k}\|
=
\|z_{0|k}^\star-\hat x_k\|
\le
\beta_h\,\epsilon.
\end{equation}
Hence,
\begin{equation}\label{eq:z0_hat_relation_final}
\begin{aligned}
\|z_{0|k}^\star\|
&\ge
\|\hat x_k\|-\|z_{0|k}^\star-\hat x_k\|
\ge
\|\hat x_k\|-\beta_h\epsilon,
\\
\|z_{0|k}^\star\|
&\le
\|\hat x_k\|+\|z_{0|k}^\star-\hat x_k\|
\le
\|\hat x_k\|+\beta_h\epsilon,
\\[0.3em]
\|z_{0|k}^\star\|^2
&\ge
\tfrac12\|\hat x_k\|^2-\beta_h^2\epsilon^2,
\qquad
\|z_{0|k}^\star\|^2
\le
2\|\hat x_k\|^2+2\beta_h^2\epsilon^2,
\end{aligned}
\end{equation}
where we used
$(a-b)^2\ge \tfrac12 a^2-b^2$
and
$(a+b)^2\le 2a^2+2b^2$.

Combining~\eqref{eq:Vz_shift_ineq_final} and~\eqref{eq:z0_hat_relation_final}
gives
\[
V(\hat x_{k+1})-V(\hat x_k)
\le
-\,\frac{\underline q}{2}\,\|\hat x_k\|^2
+\bigl(c_\delta+\underline q\,\beta_h^2\bigr)\epsilon^2
+c_V.
\]
Moreover, since the tightened feasible sets are compact and the value function
in~\eqref{eq:Vz_def_final} is quadratic, there exist computable constants
$\underline\alpha_V,\overline\alpha_V>0$ such that
\[
\underline\alpha_V\,\|z_{0|k}^\star\|^2
\le
V(\hat x_k)
\le
\overline\alpha_V\,\|z_{0|k}^\star\|^2,
\qquad \forall\,k\in\mathbb N_0.
\]
Combining these bounds with~\eqref{eq:z0_hat_relation_final} yields
\[
\frac{\underline\alpha_V}{2}\,\|\hat x_k\|^2
-\underline\alpha_V\beta_h^2\epsilon^2
\le
V(\hat x_k)
\le
2\overline\alpha_V\,\|\hat x_k\|^2
+2\overline\alpha_V\beta_h^2\epsilon^2.
\]

Define
\begin{equation}\label{eq:c_hat_def_final}
c_{\hat x}(\epsilon)
:=
\Bigl(
c_\delta
+
\max\{\underline q,\underline\alpha_V,2\overline\alpha_V\}\beta_h^2
\Bigr)\epsilon^2
+
c_V,
\end{equation}
and
\[
\alpha_1(s):=\frac{\underline\alpha_V}{2}s^2,
\qquad
\alpha_2(s):=2\overline\alpha_V s^2.
\]
Then
\begin{equation}\label{eq:ISS_diff_ineq_hat_final}
V(\hat x_{k+1})-V(\hat x_k)
\le
-\,\kappa\,\|\hat x_k\|^2
+
c_{\hat x}(\epsilon),
\end{equation}
with $\kappa:=\underline q/2$, and
\begin{equation}\label{eq:ISS_bounds_value_hat_final}
\alpha_1(\|\hat x_k\|)-c_{\hat x}(\epsilon)
\le
V(\hat x_k)
\le
\alpha_2(\|\hat x_k\|)+c_{\hat x}(\epsilon),
\ \forall\,k\in\mathbb N_0.
\end{equation}

\begin{theorem}\label{thm:ISS_value}
Let Assumptions~\ref{as:available-trajectory}--\ref{as:TCC_tail} hold, and suppose
that the SDP~\eqref{eq:poly_decay_SDP} is feasible. Then, for any
$x_0\in\mathcal X_L^{\mathrm{feas}}$ and all $k\in\mathbb N_0$,
\begin{align}
\alpha_1(\|\hat x_k\|)-c_{\hat x}(\epsilon)
\le
V(\hat x_k)
\le
\alpha_2(\|\hat x_k\|)+c_{\hat x}(\epsilon),
\label{eq:ISS_thm_bounds_value_hat}
\\[0.2em]
V(\hat x_{k+1})-V(\hat x_k)
\le
-\,\kappa\,\|\hat x_k\|^2
+
c_{\hat x}(\epsilon).
\label{eq:ISS_thm_diff_value_hat}
\end{align}
Consequently, there exist
$\hat\beta\in\mathcal{K}\mathcal{L}$ and
$\hat\Gamma\in\mathcal K$
such that
\begin{equation}\label{eq:ISS_estimated_state_bound_value}
\|\hat x_k\|
\le
\hat\beta(\|\hat x_0\|,k)
+
\hat\Gamma(\epsilon),
\qquad \forall\,k\in\mathbb N_0.
\end{equation}
Moreover, since $
\|\hat x_0\|\le \|x_0\|+\epsilon
$,
there exist
$\beta\in\mathcal{K}\mathcal{L}$ and
$\Gamma\in\mathcal K$
such that
\begin{equation}\label{eq:ISS_true_state_bound_value}
\|x_k\|
\le
\beta(\|x_0\|,k)
+
\Gamma(\epsilon),
\qquad \forall\,k\in\mathbb N_0.
\end{equation}
That is, the true closed-loop system is practically input-to-state stable on
the region of attraction $\mathcal X_L^{\mathrm{feas}}$, with stability margin
quantified by $c_{\hat x}(\epsilon)$.
\end{theorem}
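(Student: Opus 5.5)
The first two claims, \eqref{eq:ISS_thm_bounds_value_hat} and \eqref{eq:ISS_thm_diff_value_hat}, are essentially already in hand from the discussion preceding the theorem. Fix $x_0\in\mathcal X_L^{\mathrm{feas}}$. By \eqref{eq:X_L_feas_def}, $\hat x_0\in\hat{\mathcal X}_L^{\mathrm{feas}}$. Theorem~\ref{thm:rec_feas_shift} then gives feasibility for every $k$, so $V(\hat x_k)$ is well defined and the shifted candidate $(Z_{k+1}^c,V_{k+1}^c,g^+)$ exists at every step.

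For the difference inequality, we use optimality at time $k{+}1$ against the candidate's cost. The $L{-}1$ interior stage costs cancel. We then apply the terminal decrease \eqref{eq:ISStype_decrease}, which rests on Lemma~\ref{lem:dual_to_primal} and the Young's inequality choice of $\alpha$. This yields \eqref{eq:Vz_shift_ineq_final}. Next we convert $\|z_{0|k}^\star\|$ into $\|\hat x_k\|$ via the anchoring bound \eqref{eq:e0_bound_hat_final}. The constant $\beta_h$ comes from compactness of $\mathcal E$ together with the homogeneity $\mathcal E\subseteq \beta_h\epsilon\,\mathbb B_2$; this holds because $D=2(1+\gamma^\ast)\mathcal W$ scales with $\mathcal W$, so the RPI set can be taken to scale accordingly.

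The sandwich bounds follow from the quadratic bounds $\underline\alpha_V\|z_{0|k}^\star\|^2\le V\le\overline\alpha_V\|z_{0|k}^\star\|^2$ combined with \eqref{eq:z0_hat_relation_final}. The lower constant is immediate from $J_{\mathrm s}\ge\underline q\|z_0\|^2$. The upper constant I would obtain from compactness of the feasible set together with the fact that the QP value is piecewise quadratic. Each additive $\epsilon$ term is absorbed into $c_{\hat x}(\epsilon)$ through the maximum in \eqref{eq:c_hat_def_final}.

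The main step is deducing \eqref{eq:ISS_estimated_state_bound_value} from these two inequalities; this is the standard practical-ISS Lyapunov argument. Let $c:=c_{\hat x}(\epsilon)$. From the upper bound, $\|\hat x_k\|^2\ge (V(\hat x_k)-c)/(2\overline\alpha_V)$. Substituting gives
\[
V(\hat x_{k+1})\le \rho\,V(\hat x_k)+\bigl(1+\tfrac{\kappa}{2\overline\alpha_V}\bigr)c,\qquad \rho:=1-\tfrac{\kappa}{2\overline\alpha_V}.
\]
Here $\rho\in[0,1)$, since the difference inequality at small $\hat x_k$ forces $\kappa\le 2\overline\alpha_V$ up to the offsets; if necessary one enlarges $\overline\alpha_V$. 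Iterating gives
\[
V(\hat x_k)\le \rho^k V(\hat x_0)+\frac{1+\kappa/(2\overline\alpha_V)}{1-\rho}\,c .
\]
We then invert the lower bound $\alpha_1(\|\hat x_k\|)\le V(\hat x_k)+c$, use $\sqrt{a+b}\le\sqrt a+\sqrt b$, and bound $V(\hat x_0)\le\alpha_2(\|\hat x_0\|)+c$. This produces $\hat\beta(s,k)=\sqrt{(2/\underline\alpha_V)\,\rho^k\,2\overline\alpha_V}\,s$, which is exponential and hence $\mathcal{KL}$, and a gain $\hat\Gamma(\epsilon)$ of order $\sqrt{c_{\hat x}(\epsilon)}$.

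The point I expect to be delicate is that $c_{\hat x}(\epsilon)$ contains the constant $c_V$. Consequently $\hat\Gamma$ is class-$\mathcal K$ only if $c_V$ vanishes with $\epsilon$. I would argue this by noting that $\mathcal V$ bounds the last-row deviations, which under Algorithm~\ref{alg:TCC} scale with the noise, so that $\bar d_V=O(\epsilon)$. Failing that, one reads the statement as practical ISS with the offset $\sqrt{c_V}$ added to $\Gamma$.

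Finally, for the true state we use $\|x_k\|\le\|\hat x_k\|+\epsilon$ and $\|\hat x_0\|\le\|x_0\|+\epsilon$, together with the weak triangle inequality $\hat\beta(a+b,k)\le\hat\beta(2a,k)+\hat\beta(2b,0)$. This lets us set $\beta(s,k):=\hat\beta(2s,k)$ and $\Gamma(\epsilon):=\hat\Gamma(\epsilon)+\hat\beta(2\epsilon,0)+\epsilon$, which gives \eqref{eq:ISS_true_state_bound_value}.
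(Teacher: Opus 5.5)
Your argument follows the paper's route, and its overall logic is sound. The two Lyapunov inequalities come from the derivation preceding the theorem. The $\mathcal{KL}$ estimate is the standard practical-ISS argument, which you carry out explicitly via $V(\hat x_{k+1})\le\rho V(\hat x_k)+\mathrm{const}$, where the paper only cites the literature. You also supply $\|x_k\|\le\|\hat x_k\|+\epsilon$, which the paper's last step needs but leaves implicit.

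The one claim to drop is $\bar d_V=O(\epsilon)$ under Algorithm~\ref{alg:TCC}. There $u^d_i=K\hat x^d_i+v^{(j)}_i$ uses the same measured state that enters $\hat H_x$. So the last-block-row deviations $(H_{u_{[L]}}-K\hat H_{x_{[L]}})e_i$ are exactly the injected excitations (or zero). Their size is set by the designer's choice of $\mathcal V$, which trades off excitation and signal-to-noise ratio; it is not set by the noise. Assumption~\ref{as:TCC_tail_V} is not tied to that algorithm anyway.

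Your fallback is the right reading, and it does not require changing the statement. Because $\epsilon>0$ is a fixed number here, any constant offset $M\ge0$ can be absorbed into a class-$\mathcal K$ gain by adding $Ms/\epsilon$ to it. This is exactly the device behind the paper's $\beta_h:=r_{\mathcal E}/\epsilon$, so your scaling argument for $\mathcal E$ is unnecessary as well. It is also not exact, since $\mathcal A_K$ is built from a consistency set that depends on the noise bound. The real content of the theorem is ultimate boundedness with an offset that does not vanish with the noise.

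That offset must also absorb two constants that you, like the paper, pass over.
First, the appended stage cost is $J_{\mathrm s}(z^\star_{L|k},Kz^\star_{L|k}+v^c_{L-1|k+1})$. It differs from $z^{\star\top}_{L|k}Sz^\star_{L|k}$ by $2v^{c\top}_{L-1|k+1}RKz^\star_{L|k}+v^{c\top}_{L-1|k+1}Rv^c_{L-1|k+1}$. This term is not cancelled by \eqref{eq:ISStype_decrease}, although it is bounded on $\mathcal Z_f\times\mathcal V$.
Second, compactness and piecewise quadraticity give $V\le\overline\alpha_V\|z^\star_{0|k}\|^2$ only up to an additive constant. Nothing ensures that the zero trajectory is realizable with $g\in\Delta^{T-L}$.

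Both issues only enlarge the constant $c_{\hat x}(\epsilon)$ of \eqref{eq:c_hat_def_final}. Your Lyapunov-to-$\mathcal{KL}$ argument goes through unchanged with the larger constant.
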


\begin{proof}
The bounds~\eqref{eq:ISS_thm_bounds_value_hat} and~\eqref{eq:ISS_thm_diff_value_hat}
show that $V$ is a practical ISS-Lyapunov function with respect to the measured
state $\hat x_k$. Standard discrete-time ISS-Lyapunov arguments for practical
stability (see, e.g.,~\cite{angeli2004almost,SontagWang1995}) therefore
yield~\eqref{eq:ISS_estimated_state_bound_value}.

Finally, using $
\|\hat x_0\|\le \|x_0\|+\epsilon
$
together with~\eqref{eq:ISS_estimated_state_bound_value}, and the closure
properties of class-$\mathcal K$ and class-$\mathcal{K}\mathcal{L}$ functions,
one obtains~\eqref{eq:ISS_true_state_bound_value}.
\end{proof}

\section{Example}\label{sec:examples}
In this section, we illustrate the proposed TRDDPC scheme on the flight-vehicle example sampled every $1\,\mathrm{s}$,
as in~\cite{Mayne2005}. The plant is
\[
\begin{aligned}
x_{k+1}
&=
\begin{bmatrix}
1 & 1\\
0 & 1
\end{bmatrix} x_{k}
+
\begin{bmatrix}
0.5\\
1
\end{bmatrix} u_{k},
\\
\hat{x}_k &= x_k + w_k,
\end{aligned}
\]
which is open-loop unstable. For the following application of the TRDDPC scheme, the system
matrices are assumed unknown and only a single offline input-state
trajectory $\big(u^d_{[0{:}T{-}1]},\,\hat{x}^d_{[0{:}T{-}1]}\big)$ of
length $T=200$ is available. 

The measurement noise satisfies Assumption~\ref{as:measurement noise} with the
symmetric box polytope
\[
\mathcal W
:= \Big\{ w = (w_1,w_2)^\top \in\mathbb R^2 \,\Big|\, |w_1|\le 10^{-2},\ |w_2|\le 10^{-2} \Big\},
\]
so that $\mathcal W \subseteq \epsilon \mathbb B_2$ with
$\epsilon = \sqrt{0.01^2 + 0.01^2} = \sqrt{2}\times 10^{-2}$.
The same noise level is assumed both in the offline data and in the online
measurements used to update the initial condition \eqref{eq:robust_deepc_init}
in the OCP.

The state and input are constrained to the sets $\mathcal X$ and $\mathcal U$,
\[\begin{aligned}
\mathcal X
&:= \bigl\{ x=(x_1,x_2)^\top \in\mathbb R^2 \,\big|\, |x_1|\le 2,\ |x_2|\le 2 \bigr\},
\\
\mathcal U
&:= \bigl\{ u\in\mathbb R \,\big|\, |u|\le 1 \bigr\},
\end{aligned}
\]
and all tightened constraint sets in the robust data-driven formulation are constructed accordingly.
The initial state is $
x_0 = [-1; -1].
$
The stage cost is given by~\eqref{eq:QR_costs} with weighting matrices
$Q=I$ and $R=0.1$, and the prediction horizon is chosen as $L=6$. The stabilizing feedback gain $K$ is computed according to Proposition~\ref{prop:K-stab}, which simultaneously provides the design of the terminal weight $P_L$. For comparison, we apply the same
settings $(Q,R,L,K)$ and the same noise level to the robust MPC scheme in~\cite{Mayne2005}. The resulting closed-loop state trajectories for both
controllers are depicted in Fig.~\ref{fig:flight_example_states}.
It shows that the objective of stabilizing the origin is achieved with only a small deviation, while both the closed-loop input and state constraints are satisfied at all times. As a performance metric, we consider the sum of closed–loop stage costs
over the $10$ steps. The proposed TRDDPC scheme
achieves a total cost that is only $0.11\%$ higher than that of the
robust tube–based MPC scheme in~\cite{Mayne2005}. Despite being
additionally affected by the multiplicative uncertainty induced by the
offline noisy data, the closed–loop performance of the proposed scheme remains very close to that of the
model–based robust MPC scheme in~\cite{Mayne2005}.
\begin{figure}[t]
  \centering
  \includegraphics[width=1.0\columnwidth]{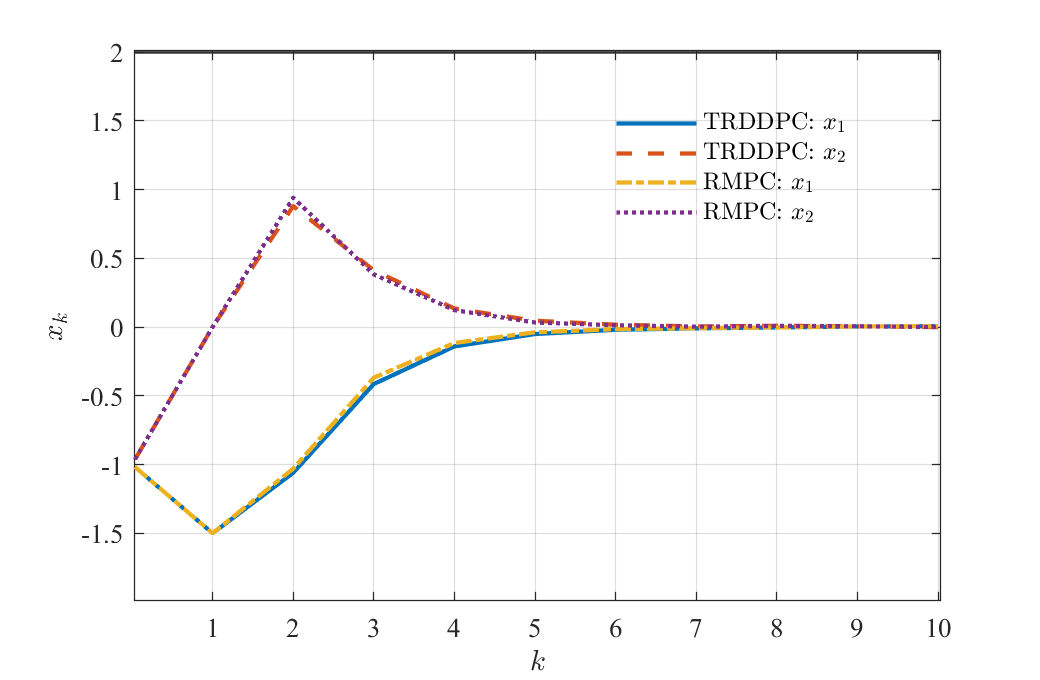}
  \caption{Closed-loop state trajectories for the flight-vehicle example
  under the robust tube-based MPC of~\cite{Mayne2005} and the proposed
  TRDDPC scheme.}
  \label{fig:flight_example_states}
\end{figure}

The first predicted state sequence $z_{[0|0,L|0]}^\star$ and the corresponding
plan-conditioned measurement rollout $\hat x_{[0|0,L|0]}$, together with the state constraint set
$\mathcal X$ and the RPI polytope $\mathcal E$, are shown in Fig.~\ref{fig:insideE}.
Defining the anchoring error $e_{\ell|0}:=z_{\ell|0}^\star-\hat x_{\ell|0}$,
the RPI construction implies $e_{\ell|0}\in\mathcal E$ for all $\ell\in\mathbb N_{0:L}$.
Figure~\ref{fig:insideE} confirms this inclusion numerically.

\begin{figure}[t]
  \centering
  \subfloat[]{
    \includegraphics[width=0.470\columnwidth,trim=3cm 0.2cm 2cm 0.2cm,clip]{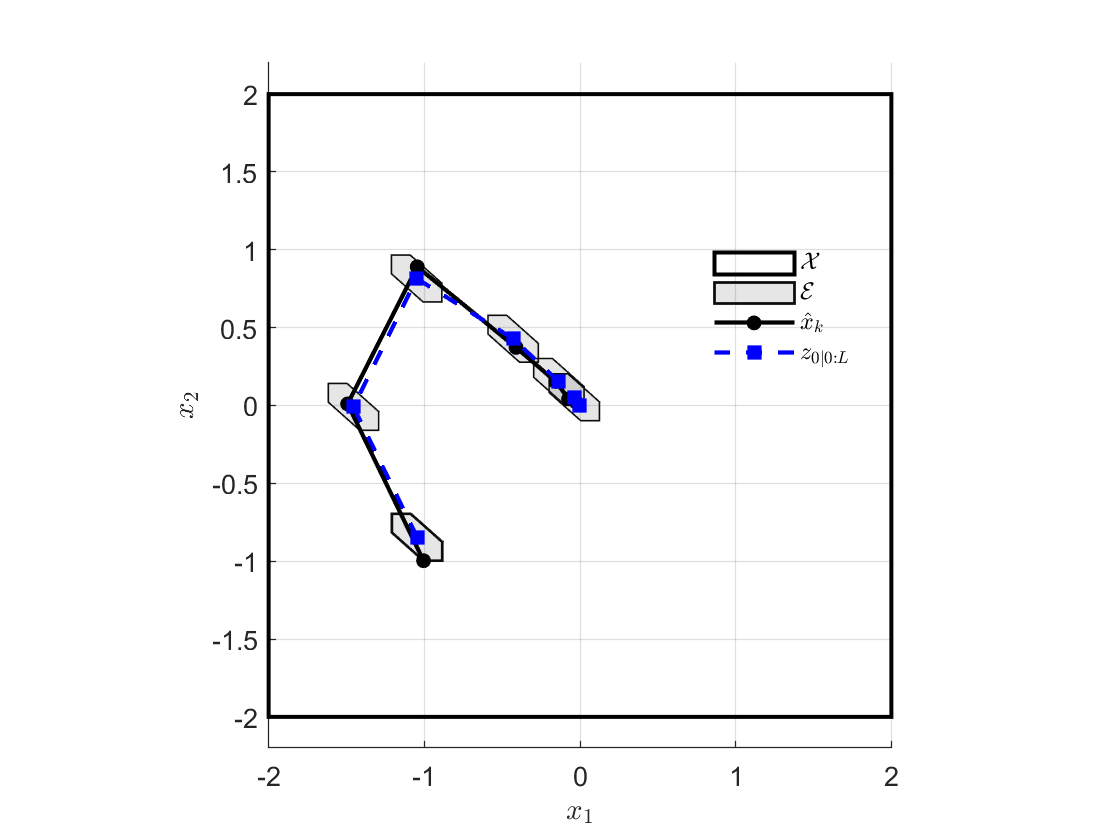}
    \label{fig:insideE:a}
  }\hfill
  \subfloat[]{
    \includegraphics[width=0.470\columnwidth,trim=3cm 0.2cm 2cm 0.2cm,clip]{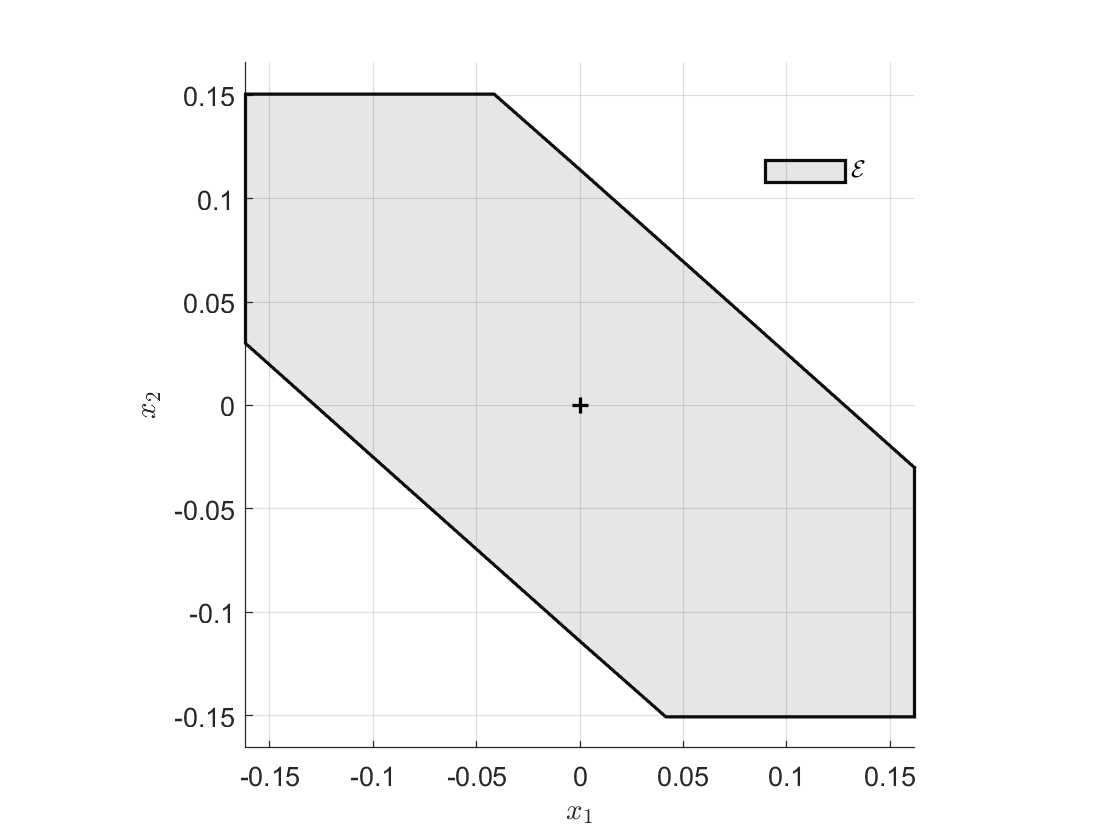}
    \label{fig:insideE:b}
  }
  \caption{First-step prediction/rollout and the RPI polytope $\mathcal E$ (zoomed in (b)).}
  \label{fig:insideE}
\end{figure}

Next, we compare the proposed TRDDPC
scheme with two existing data-driven MPC schemes: the robust DeePC scheme in~\cite{Berberich2020} and the data–driven min–max MPC (DDMMMPC)  scheme in~\cite{xie2026}. We adopt the open–loop unstable scalar system in~\cite{xie2026},
\[
x_{k+1}=1.1\,x_k+0.5\,u_k,\qquad
\hat x_k = x_k + w_k,
\]
subject to input and state constraints $|u_k|\le 2$ and $|x_k|\le 2$.
The stage cost is given by~\eqref{eq:QR_costs} with $Q=1$ and
$R=0.1$, and we consider $20$ closed–loop steps starting from
$x_0=-1$. An offline input–state trajectory of length $T=100$ is
available. Since the system is scalar, the ellipsoidal constraints in~\cite{xie2026} and the hypercube constraints in~\cite{Berberich2020}
are equivalent to interval (polytopic) constraints, and we choose a
measurement noise bound $\|w_k\|_\infty\le 10^{-4}$ that is used by all
three schemes. Apart from the inherently infinite–horizon formulation
in the DDMMMPC scheme, we employ the same prediction horizon $L=8$ for both
robust DeePC and the proposed TRDDPC.

The closed–loop state and input trajectories for the three schemes
(see Fig.~\ref{fig:scalar_cl}) all converge to a small neighbourhood of
the origin while satisfying the state and input constraints at all
times. We calculated the sum of closed–loop stage
costs over the $20$ steps, yielding
$J_{\mathrm{DDMMMPC}}$, $J_{\mathrm{RDeePC}}$, and $J_{\mathrm{TRDDPC}}$
as reported in Table~\ref{tab:scalar_comparison}.
All three schemes achieve very similar closed–loop performance, with the
proposed TRDDPC attaining the smallest cost: it is approximately
$1.5\%$ lower than that of the DDMMMPC scheme and about
$0.1\%$ lower than that of the robust DeePC scheme. The average computation times per
iteration in Table~\ref{tab:scalar_comparison} further show that the
TRDDPC QP is about three times faster than the robust DeePC QP and
roughly eight times faster than the SDP–based DDMMMPC.
This behaviour is consistent with the fact that our formulation uses the
smallest set of decision variables and retains a simple quadratic
program.
\begin{figure}[t]
  \centering

  \includegraphics[width=0.9\columnwidth]{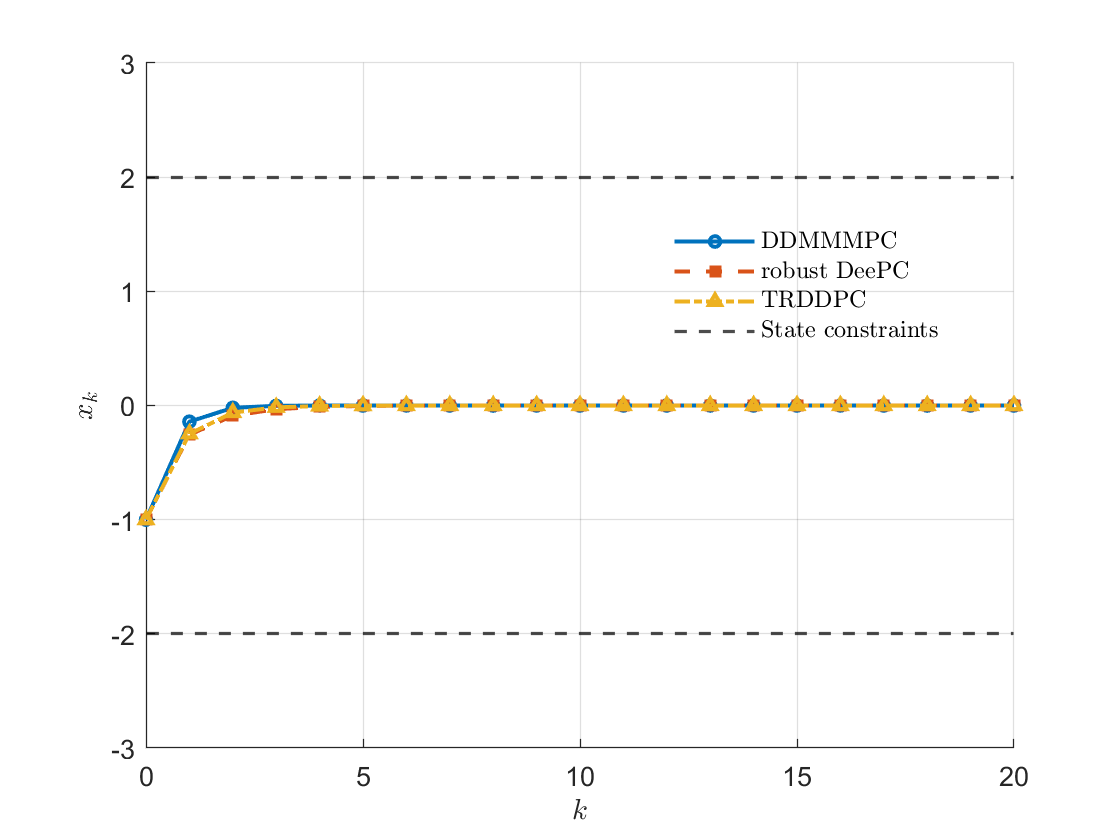}
  \vspace{1mm}

  {\small (a)}

  \vspace{3mm}

  \includegraphics[width=0.9\columnwidth]{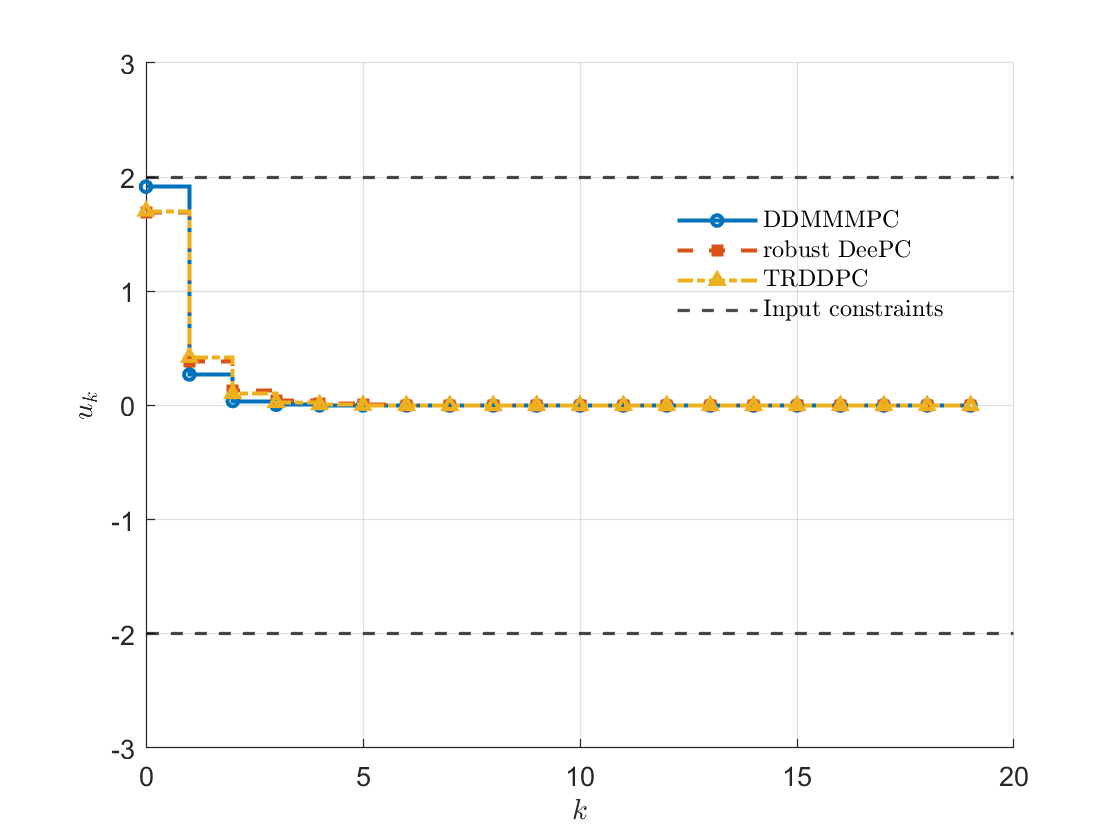}
  \vspace{1mm}

  {\small (b)}

  \caption{Closed-loop state and input trajectories for the scalar system
    under the DDMMMPC of~\cite{xie2026}, the robust
    DeePC scheme of~\cite{Berberich2020}, and the proposed TRDDPC scheme.}
  \label{fig:scalar_cl}
\end{figure}

We also assess the conservatism of each scheme by gradually increasing
the measurement noise bound and checking feasibility. The robust DeePC
scheme becomes infeasible once $\|w_k\|_\infty$ exceeds $1.29\times
10^{-4}$, while the DDMMMPC scheme
loses feasibility when $\|w_k\|_\infty$ exceeds $0.206$. In contrast,
the proposed TRDDPC remains feasible and robustly stabilizing for noise
levels up to $\|w_k\|_\infty = 0.214$. These results indicate that the TRDDPC scheme attains closed–loop
performance slightly better than the DDMMMPC scheme, while more computationally efficient. Compared with the robust DeePC scheme, the proposed controller is significantly less
conservative and still slightly more efficient in terms of computation
time. Moreover, a salient advantage of the proposed approach is that all design
parameters can be computed offline and explicitly from noisy data. In
contrast, in the DDMMMPC scheme some key parameters are only guaranteed
to exist and may not be directly available for offline computation,
whereas robust DeePC requires noise-free data for parameter computation. Finally, to guarantee the sufficient conditions underlying the shift-based recursive-feasibility argument, the proposed
TRDDPC requires a structural offline data condition in addition to the standard persistency-of-excitation condition
commonly assumed in robust DeePC and DDMMMPC.

\begin{table}[t]
  \centering
  \caption{Comparison of data-driven robust MPC schemes}
  \label{tab:scalar_comparison}
  \scriptsize
  \setlength{\tabcolsep}{2pt}
  \begin{tabular}{@{}lcccc@{}}
    \toprule
    & \shortstack{Sum of\\stage costs $J$}
    & \shortstack{Avg.\ comp.\ time\\$\bar t$ [s]}
    & \shortstack{Max.\ admissible\\$\|w\|_\infty$}
    & \shortstack{Design params.\\from noisy data?} \\
    \midrule
    DDMMMPC
      & $1.395332$
      & $5.426\times10^{-2}$
      & $2.06\times10^{-1}$
      & partially\footnotemark[1] \\
    Robust DeePC
      & $1.376511$
      & $2.053\times10^{-2}$
      & $1.29\times10^{-4}$
      & no\footnotemark[2] \\
    TRDDPC
      & $1.374937$
      & $6.683\times10^{-3}$
      & $2.14\times10^{-1}$
      & yes \\
    \bottomrule
  \end{tabular}
\end{table}

\footnotetext[1]{Some key parameters are only guaranteed to exist in~\cite{xie2026} and cannot be explicitly computed offline.}
\footnotetext[2]{Tightening parameters in~\cite{Berberich2020}
  are derived under a noise-free data assumption.}

\section{Conclusion}\label{sec:conclusions}
In this paper, we proposed a tractable tube-based robust data-driven predictive control scheme using only a finite noisy dataset of an unknown LTI system.
The resulting controller is implemented by solving a strictly convex QP online and provides certified closed-loop guarantees of recursive feasibility, constraint satisfaction, and practical input-to-state stability.
These guarantees are enabled by a simplex-constrained Hankel parametrization and a data-driven tube construction that combines an explicit mismatch bound with certified initial and terminal RPI sets. Numerical examples show that, compared with existing robust data-driven MPC schemes, the proposed design achieves improved closed-loop performance with less induced conservatism. Our result can also be interpreted as an alternative to tube-based MPC for systems
subject to both parametric uncertainty and measurement noise.

Future work will leverage online data to adaptively refine the predictor and the associated tightenings to further reduce conservatism. Another challenging and important direction is to extend the framework to nonlinear systems.

\appendices
\section{}\label{app:TCC_tail_verify}
\subsection{Verification of Assumption~\ref{as:TCC_tail}}
For the tail convex-coverage condition in Assumption~\ref{as:TCC_tail},
define the block matrix and target vector
\[
A_{\rm tail}
:=\begin{bmatrix}
H_{u_{[0:L-1]}}\\[.2em]
\hat H_{x_{[0:L-1]}}
\end{bmatrix},
\qquad
b_{\rm tail}
:=\begin{bmatrix}
u^d_{[T-L:T-1]}\\[.2em]
\hat x^d_{[T-L:T-1]}
\end{bmatrix}.
\]
Then Assumption~\ref{as:TCC_tail} holds if and only if the LP
\[
\text{find } h\in\mathbb{R}^{T-L}\quad
\text{s.t.}\quad
A_{\rm tail}h=b_{\rm tail},\ \ h\ge0,\ \ \mathbf 1^\top h=1
\]
is feasible. In this case the resulting solution $h$ provides the tail
certificate used in the explicit construction of the shifted coefficient
$g^+$ in Section~\ref{sec:rec-feas}.
\subsection{Data-collection procedure}

Algorithm~\ref{alg:TCC} describes a data-collection procedure that guarantees
Assumptions~\ref{as:TCC_tail_V} and~\ref{as:TCC_tail} for the resulting dataset.

\begin{algorithm}[t]
  \caption{Data collection for Assumptions~\ref{as:TCC_tail_V} and~\ref{as:TCC_tail}}
  \label{alg:TCC}
  \begin{algorithmic}[1]
    \STATE Set $j \gets 1$.
    \WHILE{true}
      \STATE Choose $T_{\mathrm{loc}}^{(j)},T_{\mathrm{pre},2}^{(j)} \ge L$,
      set $T_{\mathrm{pre},1}^{(j)} \gets L$ and $T_{\mathrm{tail}}^{(j)} \gets L$.
      \STATE Set $T^{(j)} \gets T_{\mathrm{pre},1}^{(j)} + T_{\mathrm{loc}}^{(j)} + T_{\mathrm{pre},2}^{(j)} + T_{\mathrm{tail}}^{(j)}$.
      \FOR{$i = 0,\ldots,T^{(j)}-1$}
        \IF{$i < T_{\mathrm{pre},1}^{(j)}$}
          \STATE Choose $u_i^{d,(j)}$ arbitrarily.
        \ELSIF{$i < T_{\mathrm{pre},1}^{(j)} + T_{\mathrm{loc}}^{(j)}$}
          \STATE Draw $v_i^{(j)} \in \mathcal V$ and set \hfill{\footnotesize(\emph{local}: excite within $\mathcal V$)}
          \[
            u_i^{d,(j)} \gets K\,\hat x_i^{d,(j)} + v_i^{(j)}.
          \]
        \ELSE
          \STATE Set $u_i^{d,(j)} \gets K\,\hat x_i^{d,(j)}$
          \hfill{\footnotesize(\emph{recovery/tail}: steer back toward equilibrium; then hold near it)}
        \ENDIF
        \STATE Apply $u_i^{d,(j)}$ and measure $\hat x_i^{d,(j)}$.
      \ENDFOR

      \STATE Form $H_{u_{[0:L]}}^{(j)}$ and $\hat H_{x_{[0:L]}}^{(j)}$
      from $\big(u^{d,(j)}_{[0{:}T^{(j)}-1]},\hat x^{d,(j)}_{[0{:}T^{(j)}-1]}\big)$.

      \STATE \textbf{Tail-coverage check:}
      find $h^{(j)}\in\Delta^{T^{(j)}-L}$ such that
      \[
        \begin{bmatrix}
          H_{u_{[0:L-1]}}^{(j)}\\[.3mm]
          \hat H_{x_{[0:L-1]}}^{(j)}
        \end{bmatrix} h^{(j)}
        =
        \begin{bmatrix}
          u^{d,(j)}_{[T^{(j)}-L:T^{(j)}-1]}\\[.3mm]
          \hat x^{d,(j)}_{[T^{(j)}-L:T^{(j)}-1]}
        \end{bmatrix}.
      \]
      \IF{a solution $h^{(j)}$ exists}
        \STATE Set $T \gets T^{(j)}$ and accept the dataset
        $\big(u^d_{[0{:}T{-}1]},\hat x^d_{[0{:}T{-}1]}\big)
        := \big(u^{d,(j)}_{[0{:}T^{(j)}-1]},\hat x^{d,(j)}_{[0{:}T^{(j)}-1]}\big)$; \textbf{break}.
      \ELSE
        \STATE Increase $T_{\mathrm{loc}}^{(j)}$ and set $j \gets j+1$.
      \ENDIF
    \ENDWHILE
  \end{algorithmic}
\end{algorithm}

By construction, the local phase injects an additive excitation $v_i^{(j)}\in\mathcal V$
into the affine law $u=K\hat x+v$, and $T_{\mathrm{loc}}^{(j)}$ can be chosen so that the
resulting prefix input is persistently exciting of order $L{+}n{+}1$ (e.g., by drawing
$v_i^{(j)}$ as a sufficiently rich PRBS over $\operatorname{vert}(\mathcal V)$).
Appending the recovery and tail phases does not destroy persistency of excitation, so the
full sequence satisfies Assumption~\ref{as:available-trajectory}.
Moreover, since $v_i^{(j)}\in\mathcal V$ for all $i$ (with $v_i^{(j)}=0$ on the recovery/tail),
the last-block-row deviations induced by each Hankel column coincide with the corresponding
applied deviation and are therefore contained in~$\mathcal V$, i.e.,~\eqref{eq:V_tail_vertices}
holds and Assumption~\ref{as:TCC_tail_V} is satisfied.
Finally, the prescribed polytope $\mathcal V$ should be selected to balance
(i) a larger admissible terminal set $\mathcal Z_f$ (via a smaller disturbance
aggregation $\Omega=(1+\gamma^\ast)\mathcal W\oplus\mathcal B\mathcal V$) and
(ii) stronger excitation (and hence higher signal-to-noise ratio).

We next show that Algorithm~\ref{alg:TCC} enforces the coverage
Assumption~\ref{as:TCC_tail} under mild stochastic excitation conditions.
Let $\zeta_{\mathrm{tail}}^{(j)}$ denote the terminal $L$-step input--state
segment collected at iteration $j$, i.e., the last $L$ samples
$\big(u^{d,(j)}_{[T^{(j)}-L:T^{(j)}-1]},\,\hat x^{d,(j)}_{[T^{(j)}-L:T^{(j)}-1]}\big)$
used in the tail-coverage check of Algorithm~\ref{alg:TCC}. Define the failure event
\[
  O_j
  := \Bigl\{\,
        \zeta_{\mathrm{tail}}^{(j)}
        \notin
        \operatorname{conv}\!\Bigl(
          \bigl\{\,[H_{u_{[0:L-1]}}^{(j)};\hat H_{x_{[0:L-1]}}^{(j)}]_{[:,i]}\,\bigr\}_{i}
        \Bigr)
      \Bigr\}.
\]
In the noise-free case, Carath\'eodory's theorem implies that, whenever the length-$L$
tail segment $\zeta_{\mathrm{tail}}^{(j)}\in\mathbb{R}^{L(n+m)}$ lies in the convex hull of the collected
length-$L$ segments, there exists a feasible certificate $h$ supported on at most
$L(n+m)+1$ segments. Consequently, having at least $L(n+m)+1$ available length-$L$ segments is
necessary for coverage in the generic full-dimensional case, while coverage itself is
certified by feasibility of the LP above.

Under Assumption~\ref{as:measurement noise} and a full-dimensional excitation distribution for
$\{v_i^{(j)}\}$ supported on~$\mathcal V$, standard convex-hull containment bounds for random samples
(see, e.g.,~\cite{HayakawaLyonsOberhauser2023}) imply that the probability of failing to cover a fixed
target point decays at least geometrically in the number of sampled $L$-step windows (up to a polynomial
prefactor). Consequently, there exist constants $M_1,M_2>0$ such that
\begin{equation}\label{eq:stoch_convex_cov}
  \mathbb P(O_j)
  \;\le\; M_1 e^{-M_2 N_j},
\end{equation}
where $N_j$ denotes the number of length-$L$ data windows collected up to iteration $j$.
Whenever $O_j$ occurs, Algorithm~\ref{alg:TCC} increases the local excitation length
$T_{\mathrm{loc}}^{(j)}$, so that $N_j\to\infty$ as $j\to\infty$.
Since~\eqref{eq:stoch_convex_cov} implies $\sum_{j=1}^{\infty}\mathbb P(O_j)<\infty$,
the first Borel--Cantelli lemma yields $\mathbb P(\limsup_{j\to\infty} O_j)=0$.
Hence, with probability one, only finitely many failures occur, the tail-coverage check eventually
succeeds, and Algorithm~\ref{alg:TCC} terminates after finitely many iterations with an accepted dataset
satisfying Assumption~\ref{as:TCC_tail} almost surely.

\section{}\label{app:constants}
In this appendix we collect explicit expressions for the constants used in the
value-function ISS analysis of Section~\ref{ISS}.

\subsection{Euclidean radii and the anchoring constant}
For a compact convex set $\mathcal S\subset\mathbb R^{n}$, define its Euclidean
radius by
$
\operatorname{rad}(\mathcal S)
:=
\inf\{\rho\ge0:\ \mathcal S\subseteq \rho\,\mathbb B_2\}
$.
If a vertex representation is available, then
$
\operatorname{rad}(\mathcal S)
=
\max_{\xi\in\operatorname{vert}(\mathcal S)}\|\xi\|_2
$.
More generally, any computable $\bar r$ satisfying
$\mathcal S\subseteq \bar r\,\mathbb B_2$ can be used in place of
$\operatorname{rad}(\mathcal S)$ in the bounds below.

Since the anchoring constraint enforces
$e_{0|k}=z_{0|k}^\star-\hat x_k\in\mathcal E$ and
$\mathcal W\subseteq \epsilon\,\mathbb B_2$, define
\[
r_{\mathcal E}:=\operatorname{rad}(\mathcal E),
\qquad
\beta_h := \frac{r_{\mathcal E}}{\epsilon}.
\]
Then~\eqref{eq:e0_bound_hat_final} holds.

\subsection{Comparison and bias constants}
Since the tightened feasible sets are compact and the value function
in~\eqref{eq:Vz_def_final} is quadratic, there exist
$\underline\alpha_V,\overline\alpha_V>0$ such that
\[
\underline\alpha_V\,\|z_{0|k}^\star\|^2
\le
V(\hat x_k)
\le
\overline\alpha_V\,\|z_{0|k}^\star\|^2,
\qquad \forall\,k\in\mathbb N_0.
\]
Combining these bounds with~\eqref{eq:z0_hat_relation_final} yields
\[
V(\hat x_k)
\ge
\frac{\underline\alpha_V}{2}\|\hat x_k\|^2-\underline\alpha_V\,\beta_h^2\,\epsilon^2,
\]
and
\[
V(\hat x_k)
\le
2\overline\alpha_V\|\hat x_k\|^2 + 2\overline\alpha_V \beta_h^2\,\epsilon^2.
\]
Hence one may take
\[
\alpha_1(s):=\frac{\underline\alpha_V}{2}s^2,
\qquad
\alpha_2(s):=2\overline\alpha_V s^2.
\]

Finally, the bias term used in~\eqref{eq:c_hat_def_final} may be chosen as
\[
c_{\hat x}(\epsilon)
=
\Bigl(
c_\delta
+
\max\{\underline q,\underline\alpha_V,2\overline\alpha_V\}\beta_h^2
\Bigr)\epsilon^2
+
c_V.
\]

\begin{IEEEbiography}[{\includegraphics[width=1in,height=1.25in,clip,keepaspectratio]{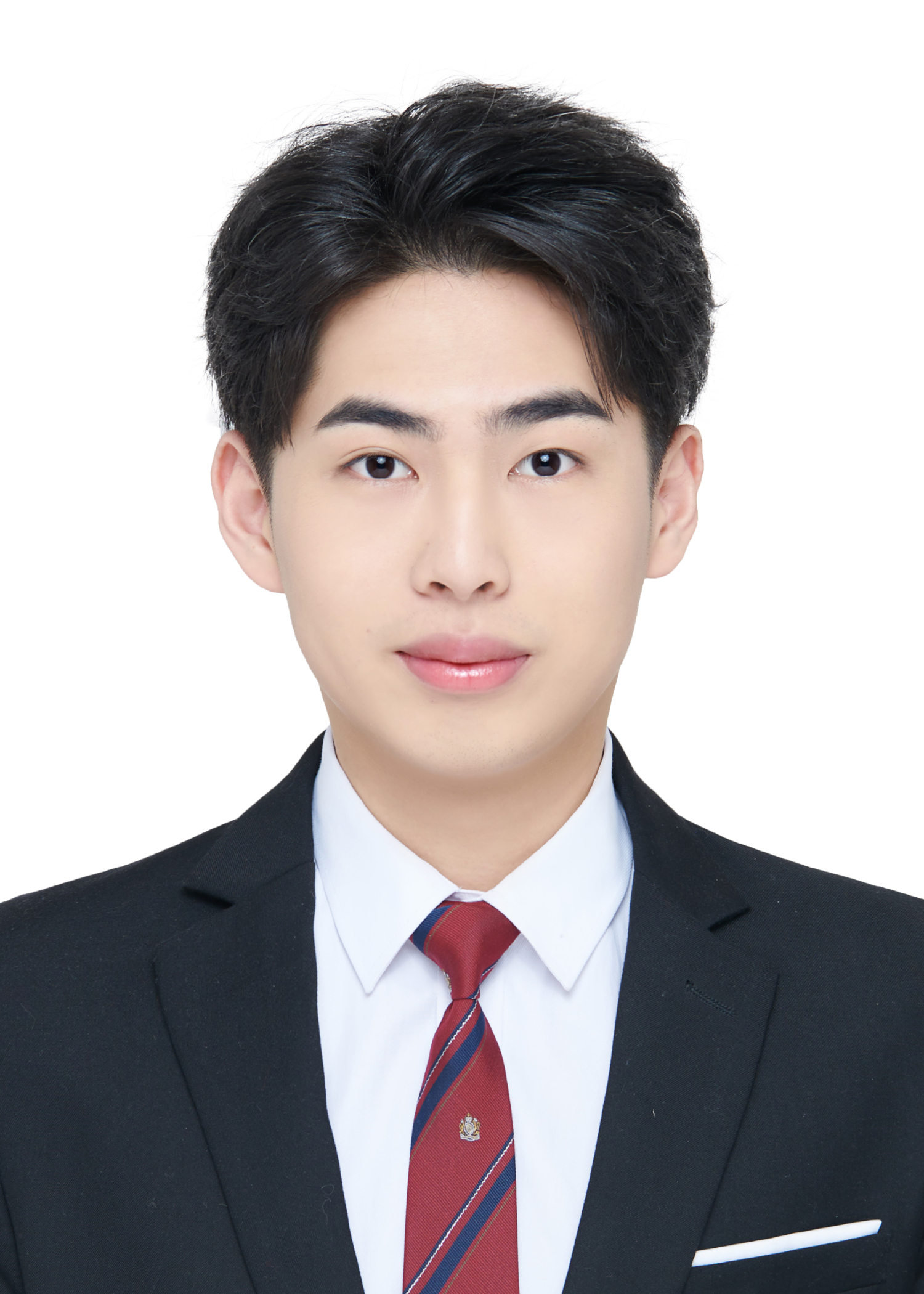}}]{Chi Wang} was born in Zhejiang, China, in 2001. He received the M.Sc. degree in Control and Optimisation from Imperial College London, London, U.K., in 2024. He is currently pursuing the Ph.D. degree with the Control and Power Group, Imperial College London. His current research focuses on the theory of robust data-driven control of constrained systems and predictive control
\end{IEEEbiography}

\begin{IEEEbiography}[{\includegraphics[width=1in,height=1.25in,clip,keepaspectratio]{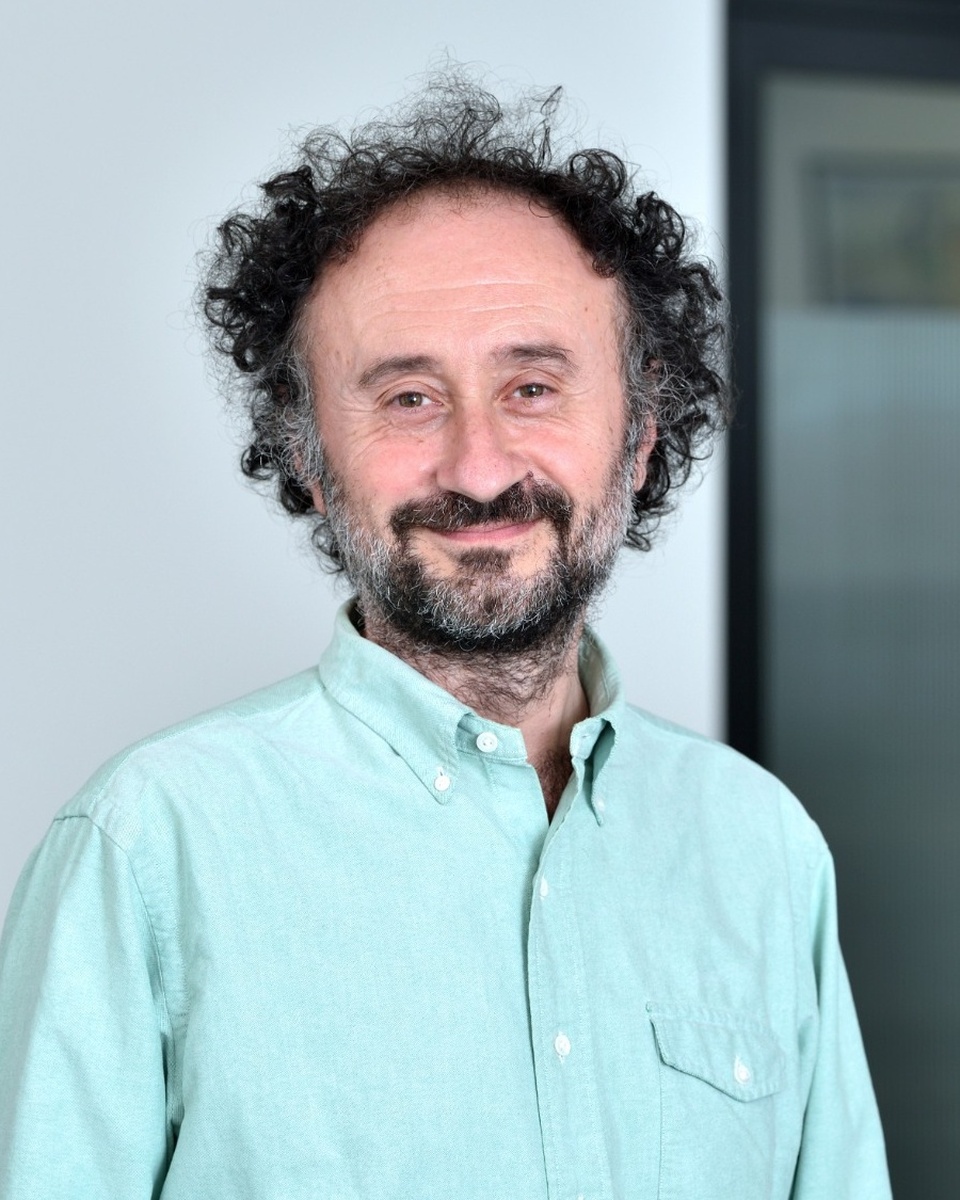}}]{David Angeli} (FIEEE) is a Professor of Nonlinear Network Dynamics with the Department of Electrical and Electronic Engineering, Imperial College London, London, U.K. He received the B.S. degree in computer science engineering and the Ph.D. degree in control theory from the University of Florence, Florence, Italy, in 1996 and 2000, respectively. Since 2000, he has been with the Department of Information Engineering, University of Florence, where he served as an Assistant Professor and, from 2005, as an Associate Professor. In 2007, he was a Visiting Professor with INRIA Rocquencourt, Paris, France. In 2008, he joined the Department of Electrical and Electronic Engineering, Imperial College London, as a Senior Lecturer, where he is currently a Professor and the Director of Postgraduate Teaching. He is the author of more than 120 journal papers in the areas of nonlinear systems stability, control of constrained systems, model predictive control, chemical reaction network theory, and smart grids. Prof. Angeli served as an Associate Editor for the \textit{IEEE Transactions on Automatic Control} and \textit{Automatica}. He was elevated to Fellow of the IEEE in 2015 for contributions to nonlinear control theory. He has been a Fellow of the IET since 2018 and was the recipient of the Honeywell Medal from InstMC in 2021.
\end{IEEEbiography}

\end{document}